\DeclareMathOperator{\co}{co}
\newcommand{\cF}{{\mathcal F}}
\newcommand{\cI}{{\mathfrak I}}
\newcommand{\cJ}{{\mathcal K}}
\newcommand{\cR}{{\mathfrak R}}
\newcommand{\cX}{{\mathcal X}}
\newcommand{\cY}{{\mathcal Y}}
\newcommand{\cW}{{\mathcal W}}
\newcommand{\cB}{{\mathcal B}}
\newcommand{\cC}{{\mathfrak C}}
\newcommand{\fF}{{\mathbb F}}
\newcommand{\fR}{{\mathbb R}}
\newcommand{\bs}{{\mathbf s}}
\def\paren#1{\smash{(}#1\smash{)}}
\newtheorem{proposition}[equation]{Proposition}
\newtheorem{corollary}[equation]{Corollary}
\newtheorem{theorem}[equation]{Theorem}
\newtheorem{remark}[equation]{Remark}
\newtheorem{lemma}[equation]{Lemma}
\newtheorem{definition}[equation]{Definition}
\DeclareMathOperator{\card}{\#}
\begin{document}
\title{Polar codes for the two-user multiple-access channel}
\author{Eren \c{S}a\c{s}o\u{g}lu, Emre Telatar, Edmund Yeh}
\maketitle

\begin{abstract}
\noindent
Ar{\i}kan's polar coding method is extended to two-user multiple-access
channels.  It is shown that if the two users of the channel use the Ar{\i}kan
construction, the resulting channels will polarize to one of five
possible extremals, on each of which uncoded transmission is optimal.
The sum rate achieved by this coding technique is the one that correponds
to uniform input distributions.  The encoding and decoding complexities
and the error performance of these codes are as in the single-user
case: $O(n\log n)$ for encoding and decoding, and $o(\exp(-n^{1/2-\epsilon}))$
for block error probability, where $n$ is the block length.
\end{abstract}
%\begin{keywords} Capacity-achieving codes, channel polarization, polar codes.
%\end{keywords}

\section{Introduction}

Polar coding, invented by Ar{\i}kan~\cite{Arikan2009}, is a
technique for 
achieving
the `symmetric capacity' of binary input, memoryless channels.  The
underlying principle of the technique is to convert repeated uses of a
given single-user channel to single uses of a set of extremal
channels---almost every channel in the set is either almost perfect, or
almost useless. Ar{\i}kan calls this phenomenon \emph{polarization}.  In this
note we describe a way to extend this technique to multiple-access
channels (MACs).

One way to do this extension is via the `rate splitting/onion peeling' scheme
of~\cite{RimUrb1996}, \cite{GRUW2001}. In Appendix A, we
describe how arbitrary points in the capacity region of a given MAC
can be achieved using polar codes and rate splitting techniques.

The approach taken in here is different, partly because
our motivation is to see whether multiple-access channels
polarize in the same way as single-user channels do.  In the following,
we will describe a technique to `polarize' a given two-user
multiple-access channel in the same sense as
in~\cite{Arikan2009}, i.e., we
will convert repeated uses of this MAC into single uses of extremal MACs.
Whereas in the single user case there are only two extremal channels
(perfect or useless), we will see that in the multiple-access case
there will be five.

The coding scheme that results from this construction shares some
properties of the single-user case: the encoding and decoding
complexity is $O(n\log n)$, $n$ being the block length, and the
block error probability is roughly $O(2^{-\sqrt{n}})$. Also analogous
to the single-user polar codes' achieving the `symmetric capacity', codes for
the multiple-access channel are capable of
achieving some of the rate pairs on the dominant face of the rate
region obtained with uniformly distributed inputs.

\section{Preliminaries}
\label{sec:preliminaries}
Let $P\colon\cX\times\cW\to\cY$ be a two user multiple-access channel with
input alphabets $\cX=\cW=\fF_q=\{0,1,\dotsc,q-1\}$, where $q$ is a prime
number. The output alphabet $\cY$ may be arbitrary. The channel is
specified by $P(y|x,w)$, the conditional probability
of each output symbol $y\in\cY$ for each possible input symbol pair
$(x,w)\in\cX\times\cW$.
The capacity region of such a channel is given by
$$
\cC(P):=\co\Bigl(\bigcup_{X,W}\cR(X,W)\Bigr)
$$
where
\begin{align*}
\cR(W,X)=\bigr\{(R_1,R_2)\colon 0\leq R_1&\leq I(X;YW),\;\\
	0\leq R_2&\leq I(W;YX),\;\\
	R_1+R_2&\leq I(XW;Y)\bigr\},
\end{align*}
the union is over all random variables $X\in\cX$, $W\in\cW$, and $Y\in\cY$ jointly distributed as
$$
p_{XWY}(x,w,y)=p_X(x)p_W(w)P(y\mid x,w),
$$
and $\co(S)$ denotes the convex hull of the set $S$\footnote{All
logarithms in this note will be to the base $q$. This in
particular implies that $I(X;YW),I(W;YX)\in[0,1]$ and
$I(XW;Y)\in[0,2]$.}
.

In this note, rather than the capacity region, we will be interested in
the region 
\begin{align*}
\cI(P)&:=\cR(X,W)\quad
\text{where $X$ and $W$ are \emph{uniformly distributed}
	on $\fF_q$}.
\end{align*}

Given such a channel $P$ and independent random variables $X,W$ uniformly
distributed on $\fF_q$, define 
$$
I^{\paren{1}}(P):=I(X;YW),\quad I^{\paren{2}}(P):=I(W;YX), \text{ and } I^{\paren{12}}(P):=I(XW;Y),
$$
and let
$$
\cJ(P):=\big(I^{\paren{1}}(P),I^{\paren{2}}(P),I^{\paren{12}}(P)\big) \in \mathbb{R}^3.
$$
Note that the region $\cI(P)$ is defined by
$$
\cI(P)=\bigl\{(R_1,R_2)\colon 0\leq R_1\leq I^{\paren{1}}(P), \;
	0\leq R_2\leq I^{\paren{2}}(P),\; R_1+R_2\leq I^{\paren{12}}(P)\bigr\}.
$$
Further note that $\max\{I^{\paren{1}},\,I^{\paren{2}}\}
\leq I^{\paren{12}}\leq I^{\paren{1}}+I^{\paren{2}}$, therefore the constraints
that define $\cI(P)$ are polymatroidal.  In particular, there exists
$(R_1,R_2)\in\cI(P)$ for which $R_1+R_2=I^{\paren{12}}$. The set of such points is called the \emph{dominant face} of $\cI(P)$.

\section{Polarization}
\label{sec:polarization}
Two independent uses of $P$ yields a multiple-access channel $P^2$ with
input alphabets $\cX^2$ and $\cW^2$, and output alphabet $\cY^2$.  Setting
the inputs $(X_1,X_2,W_1,W_2)$ to be independent and uniformly
distributed on $\fF_q$, and letting $(Y_1,Y_2)$ denote the output,
the region  $\cI(P^2)$ is described by the three quantities
\begin{align*}
I(X_1X_2;Y_1Y_2W_1W_2) &= 2I^{\paren{1}}(P),\\
I(W_1W_2;Y_1Y_2X_1X_2) &= 2I^{\paren{2}}(P),\quad\text{and}\\
I(X_1X_2W_1W_2;Y_1Y_2) &= 2I^{\paren{12}}(P)
\end{align*}
that upper bound $R_1$, $R_2$ and $R_1+R_2$ respectively.  Now
consider putting the pair $(X_1,X_2)\in\fF_q^2$ in one-to-one correspondence
with $(U_1,U_2)\in\fF_q^2$ via
$$
X_1=U_1+U_2,\quad X_2=U_2
$$
and the pair $(W_1,W_2)\in\fF_q^2$ in one-to-one correspondence
with $(V_1,V_2)\in\fF_q^2$ via 
$$
W_1=V_1+V_2,\quad W_2=V_2,
$$
where both additions are modulo-$q$. Observe that $(U_1,U_2,V_1,V_2)$ are also independent and uniformly distributed
on $\fF_q$.  Note further that
\begin{align}
\label{eq:s1}
2I^{\paren{1}}(P)&=I(X_1X_2;Y_1Y_2W_1W_2)\\
&=I(U_1U_2;Y_1Y_2V_1V_2)\notag\\
&=I(U_1;Y_1Y_2V_1V_2)+I(U_2;Y_1Y_2V_1V_2U_1)\notag\\
&\geq I(U_1;Y_1Y_2V_1)+I(U_2;Y_1Y_2V_1V_2U_1),\notag\\[1.5ex]
\label{eq:s2}
2I^{\paren{2}}(P)&=I(W_1W_2;Y_1Y_2X_1X_2)\\
&=I(V_1V_2;Y_1Y_2U_1U_2)\notag\\
&=I(V_1;Y_1Y_2U_1U_2)+I(V_2;Y_1Y_2U_1U_2V_1)\notag\\
&\geq I(V_1;Y_1Y_2U_1)+I(V_2;Y_1Y_2U_1U_2V_1),\notag\\[-1.5ex]
\intertext{and}
\label{eq:s12}
2I^{\paren{12}}(P)&=I(X_1X_2W_1W_2;Y_1Y_2)\\
&=I(U_1U_2V_1V_2;Y_1Y_2)\notag\\
&=I(U_1V_1;Y_1Y_2)+I(U_2V_2;Y_1Y_2U_1V_1).\notag
\end{align}

Observe that the quantities
$$
I(U_1;Y_1Y_2V_1),\quad
I(V_1;Y_1Y_2U_1),\quad\text{and}\quad
I(U_1V_1;Y_1Y_2)
$$
are those that describe the region associated to 
the $q$-ary input multiple-access channel $U_1V_1\to Y_1Y_2$, and the
quantities
$$
I(U_2;Y_1Y_2V_1V_2U_1),\quad
I(V_2;Y_1Y_2U_1U_2V_1),\quad\text{and}\quad
I(U_2V_2;Y_1Y_2U_1V_1)
$$
are those that describe the region associated to
the $q$-ary input multiple-access channel $U_2V_2\to Y_1Y_2U_1V_1$.  This
motivates the following.

\begin{definition}
Suppose $P\colon\cX\times\cW\to\cY$ is a two user multiple-access channel
with input alphabet $\fF_q$.  Define two new multiple-access channels,
$P^-\colon\cX\times\cW\to\cY\times\cY$ and
$P^+\colon\cX\times\cW\to\cY\times\cY\times\cX\times\cW$ as:
\begin{align*}
P^{-}(y_1,y_2|u_1,v_1)&=\sum_{u_2\in\cX}\sum_{v_2\in\cW}
	\tfrac1{q^2} P(y_1|u_1+u_2,v_1+v_2) P(y_2|u_2,v_2),\\
P^{+}(y_1,y_2,u_1,v_1|u_2,v_2)&=\tfrac1{q^2} P(y_1|u_1+u_2,v_1+v_2) P(y_2|u_2,v_2).
\end{align*}
\end{definition}

The channels $P^-$ and $P^+$ correspond to $U_1V_1\to Y_1Y_2$ and $U_2V_2\to
Y_1Y_2U_1V_1$ above, respectively. It is clear that the channel $P^-$ can be \emph{synthesized} from two
independent uses of the channel $P$, whereas the channel $P^+$ in general cannot,
since at its output we require $(U_1,V_1)$ in addition to $(Y_1,Y_2)$.
However, $P^+$ can be synthesized from two uses of the channel $P$ with
the aid of a \emph{genie} that delivers $(U_1,V_1)$ as side information to
the output terminal.

Note that the channel $P^-$ is `worse' and the channel $P^+$ is `better'
than the channel $P$ in the sense that
$I^{\paren{\alpha}}(P^-)\leq I^{\paren{\alpha}}(P)\leq I^{\paren{\alpha}}(P^+)$
for each $\alpha\in\{1,2,12\}$.  To see this, observe that if we process
the output $(y_1,y_2,u_1,v_1)$ of the channel $P^+$ to keep only $y_2$,
the resulting channel is identical to $P$.  Thus
$I^{\paren{\alpha}}(P^+)\geq I^{\paren{\alpha}}(P)$.  That
$I^{\paren{\alpha}}(P^-)\leq I^{\paren{\alpha}}(P)$ then follows from~\eqref{eq:s1}, \eqref{eq:s2} and~\eqref{eq:s12}.
Consequently,
$$
\cI(P^-) \subset \cI(P) \subset \cI(P^+).
$$
Furthermore, by virtue of equations~\eqref{eq:s1}, \eqref{eq:s2}
and~\eqref{eq:s12},
$$
\tfrac12\cI(P^-)+\tfrac12\cI(P^+)\subset \cI(P),
$$
where the left-hand side of the above denotes set sum, i.e.,
$$
\tfrac12\cI(P^-)+\tfrac12\cI(P^+) = \left\{ \tfrac12 a + \tfrac12 b\colon a\in\cI(P^-), b\in\cI(P^+)\right\}.
$$
Nevertheless, by the polymatroidal nature of
$(I^{\paren{1}},I^{\paren{2}},I^{\paren{12}})$ and
by~\eqref{eq:s12}, there are points in $\frac12\cI(P^-)+\frac12\cI(P^+)$
that are on the dominant face of $\cI(P)$.

We have now seen that from two independent copies of a $q$-ary input
multiple-access channel $P$ we can derive two $q$-ary input multiple-access
channels $P^-$ and $P^+$.  Applying the same process to $P^-$ and
$P^+$, we can derive from four independent copies of $P$, four $q$-ary
input multiple-access channels $P^{--}:=(P^-)^-$, $P^{-+}:=(P^-)^+$,
$P^{+-}:=(P^+)^-$ and $P^{++}:=(P^+)^+$.  Recursively applying the process
$\ell$ times results in $2^\ell$ $q$-ary input multiple-access
channels
$$
P^{-\dotsm-},\dots,P^{+\dotsm+}.
$$
These channels have the property that the set
$$
2^{-\ell} \sum_\bs \cI(P^{\bs})
$$
is a subset of $\cI(P)$, but contains points on the dominant face of $\cI(P)$.

The main result reported in this section is that these derived channels polarize
in the following sense:
\begin{theorem}
\label{thm:polar}
Let $P$ be a $q$-ary input multiple-access channel. Let
$M:=\{(0,0,0),\,(0,1,1),\,(1,0,1),\,(1,1,1),\,(1,1,2)\}\subset\fR^3$, and
for $p\in\fR^3$, let $d\bigl(p,M\bigr):=\min_{x\in M}\|p-x\|$
denote the distance from a point $p$ to $M$.
Then, for any $\delta>0$
$$
\lim_{\ell\to\infty}\frac{1}{2^\ell} \card\bigl\{\bs\in\{-,+\}^\ell\colon
	d\bigl( \cJ(P^\bs),M\bigr)\geq\delta
\bigr\}=0.
$$
That is, except for a vanishing fraction, the regions $\cI(P^\bs)$
approach one of five possible regions.
\end{theorem}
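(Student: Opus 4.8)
The plan is to run the usual polar-coding martingale argument on the $\fR^3$-valued process $\cJ_\ell:=\cJ(P^{B_1\cdots B_\ell})$, where $B_1,B_2,\dots$ are i.i.d.\ uniform on $\{-,+\}$, and then to pin down the only possible limits as the five points of $M$. \emph{Convergence.} By the equality in~\eqref{eq:s12} we have $I^{\paren{12}}(P^-)+I^{\paren{12}}(P^+)=2I^{\paren{12}}(P)$, so $\ell\mapsto I^{\paren{12}}(P^{B_1\cdots B_\ell})$ is a bounded martingale; by~\eqref{eq:s1} and~\eqref{eq:s2}, $I^{\paren1}(P^-)+I^{\paren1}(P^+)\le 2I^{\paren1}(P)$ and likewise for $I^{\paren2}$, so $I^{\paren1}$ and $I^{\paren2}$ are bounded supermartingales, and the quantity $S(P):=I^{\paren1}(P)+I^{\paren2}(P)-I^{\paren{12}}(P)=I(X;W\mid Y)\in[0,1]$ is a bounded supermartingale too, its one-step defect being exactly the sum of the two conditional-mutual-information slack terms by which~\eqref{eq:s1} and~\eqref{eq:s2} are inequalities. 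Hence $\cJ_\ell\to\cJ_\infty=(I^{\paren1}_\infty,I^{\paren2}_\infty,I^{\paren{12}}_\infty)$ almost surely. Looking at the Doob decompositions, the compensators are increasing with summable total mass, so all the slack terms and all the martingale increments vanish along the process almost surely; in particular $\cJ(P^{B_1\cdots B_\ell\pm})-\cJ(P^{B_1\cdots B_\ell})\to 0$ a.s.\ for both signs.

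\emph{Which channels are stable.} The structural input is that three single-user channels derived from $P$ transform cleanly. First, the ``combined'' channel $\bar P\colon(X,W)\to Y$, with input in the group $\fF_q\times\fF_q$, has $I(\bar P)=I^{\paren{12}}(P)$, and --- reading off the definition of $P^{\pm}$ --- its Ar{\i}kan transforms over that group are exactly $P^{\pm}$ regarded as single-input channels. Second, the two marginal channels $X\to Y$ and $W\to Y$ become their ordinary Ar{\i}kan ``$-$'' transforms under $P\mapsto P^-$, and under $P\mapsto P^+$ become channels that can be stochastically degraded to their ordinary ``$+$'' transforms; hence their Bhattacharyya parameters are bounded supermartingales obeying $Z(P^+)\le Z(P)^2$, which by the standard single-user argument (the parameter converges, and along the ``$+$'' subsequence its limit $z$ satisfies $z\le z^2$) forces $I(X;Y)$ and $I(W;Y)$ to polarize to $\{0,1\}$ almost surely. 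For the combined channel I would invoke (or supply in an appendix) polarization over the abelian group $\fF_q^2$: almost surely $\bar P^{B_1\cdots B_\ell}$ approaches a channel that is noiseless modulo some subgroup $H\le\fF_q^2$, so $I^{\paren{12}}_\infty=\log_q(q^2/|H|)\in\{0,1,2\}$.

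\emph{Assembling.} The chain-rule identities $I^{\paren{12}}=I(X;Y)+I^{\paren2}=I(W;Y)+I^{\paren1}$ and $S=I^{\paren1}-I(X;Y)$, all immediate because $X$ and $W$ are independent, now give $I^{\paren1}_\infty=I^{\paren{12}}_\infty-I(W;Y)_\infty\in\{0,1\}$, $I^{\paren2}_\infty\in\{0,1\}$, and $S_\infty=I^{\paren1}_\infty-I(X;Y)_\infty\in\{0,1\}$; feeding these back through $I^{\paren{12}}_\infty=I^{\paren1}_\infty+I^{\paren2}_\infty-S_\infty$ together with the polymatroid constraints $\max\{I^{\paren1},I^{\paren2}\}\le I^{\paren{12}}\le I^{\paren1}+I^{\paren2}$ leaves precisely the five vectors listed in $M$ (the $q-1$ ``slanted'' line subgroups all yield the same point $(1,1,1)$, since $\cJ$ records only the three mutual informations). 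Finally, because the $B_i$ are uniform, $\tfrac1{2^\ell}\card\{\bs\in\{-,+\}^\ell:d(\cJ(P^{\bs}),M)\ge\delta\}=\Pr\{d(\cJ_\ell,M)\ge\delta\}$, and $d(\cJ_\ell,M)\to d(\cJ_\infty,M)=0$ a.s.\ together with boundedness yields the limit $0$ by dominated convergence.

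The step I expect to be the main obstacle is the fixed-point analysis for the combined channel, i.e.\ establishing polarization over the non-cyclic group $\fF_q^2$ --- equivalently, ruling out that $I^{\paren{12}}_\infty$ lands strictly between $1$ and $2$, that is, $S_\infty\in(0,1)$, in the regime where both users become essentially perfect. A concrete way to attack this without a black-box group-polarization theorem is to track, for each $\lambda\in\fF_q\cup\{\infty\}$, the single-user channel whose input is the linear combination $X+\lambda W$ (with $\lambda=\infty$ meaning $W$) together with the genie that survives the construction, to show each of these $q+1$ channels polarizes by the ordinary $q$-ary argument, and to use the slack-vanishing from the Doob step to tie their limits back to $\cJ_\infty$; checking that these limits are mutually consistent and that the only surviving configurations are exactly the five regions of $M$ is where the real bookkeeping lies.
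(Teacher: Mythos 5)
The skeleton of your argument coincides with the paper's: the same (super)martingale setup for $(I^{\paren{1}}_\ell,I^{\paren{2}}_\ell,I^{\paren{12}}_\ell)$, the same identification of the $\bs$-fraction with a probability, and the same polymatroid bookkeeping to cut the candidate limits down to the five points of $M$. The genuine gap is in the one load-bearing analytic step: showing that the limit coordinates are integers. You derive $I^{\paren{1}}_\infty,I^{\paren{2}}_\infty\in\{0,1\}$ \emph{from} $I^{\paren{12}}_\infty\in\{0,1,2\}$, and you obtain the latter only by invoking polarization over the group $\fF_q^2$. That is not a citable black box here: $\fF_q^2$ is not a prime-order alphabet, the single-user polarization theorem for prime $q$ does not apply to it, and ``the combined channel polarizes to noiseless-modulo-a-subgroup'' is essentially equivalent to the theorem you are proving (this is precisely the difficulty taken up in \cite{AbbeTelatar2010} for more users). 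The paper sidesteps this entirely by applying the entropy inequality of Lemma~\ref{lem:basic} (\cite{Sasoglu2010}) to \emph{prime-alphabet} single-user channels only: to the genie-aided channels $U\to(Y,V)$ and $V\to(Y,U)$ (Corollary~\ref{cor:i1}), which turns the vanishing supermartingale defect of $I^{\paren{1}}_\ell,I^{\paren{2}}_\ell$ directly into $I^{\paren{1}}_\infty,I^{\paren{2}}_\infty\in\{0,1\}$, and to the marginal $U\to Y$ (Lemma~\ref{lem:i12}), which gives $I^{\paren{12}}_\infty-I^{\paren{j}}_\infty\in\{0,1\}$. Your own chain rule $I^{\paren{12}}=I(X;Y)+I^{\paren{2}}$ would let you bypass the group statement \emph{if} you established $I^{\paren{2}}_\infty\in\{0,1\}$ independently --- but that is exactly the Corollary~\ref{cor:i1} step, and you never apply the entropy inequality to the genie-aided channels, so the circle is not closed; the ``$q+1$ linear combinations plus bookkeeping'' fallback you sketch is likewise left open at the point where it matters.

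A secondary problem: your polarization argument for the marginal channels is the binary one. The degradation relations you state (the minus branch of the MAC recursion acts on $P[\alpha,\gamma]$ as the single-user bad transform, the plus branch upgrades the single-user good transform) are correct and are exactly items (i)--(ii) of the paper's Appendix~B, but for prime $q>2$ the Bhattacharyya parameter is not known to be a supermartingale under the Ar{\i}kan transform, and the available recursion $Z(Q^g)\le qZ(Q)^2$ only yields the fixed-point condition $z\le qz^2$, i.e.\ $z\in\{0\}\cup[1/q,1]$, which does not force $I\in\{0,1\}$. The correct $q$-ary argument (as in \cite{STA2009}) again runs through the mutual-information submartingale and Lemma~\ref{lem:basic}; the $Z$ bounds are used only for the \emph{rate} of polarization. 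Your final reduction --- fraction equals probability, $d(\cJ_\ell,M)\to 0$ a.s., bounded convergence --- is correct and identical to the paper's.
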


\begin{remark}
\label{rmk:limiting}
The five limiting regions in Theorem~\ref{thm:polar}
 are the following.

\centerline{\setlength{\unitlength}{1bp}
\begin{picture}(286,210)
\small
\put(0,125){
	\put(0,0){\includegraphics{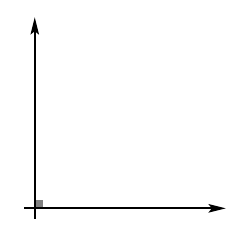}}
	\put(10,1){\makebox(0,0){$0$}}
%	\put(55,1){\makebox(0,0){$1$}}
	\put(2,10){\makebox(0,0){$0$}}
%	\put(2,55){\makebox(0,0){$1$}}
	\put(74,9){\makebox(0,0){$R_1$}}
	\put(10,72){\makebox(0,0){$R_2$}}
	\put(37,-12){\makebox(0,0){(000)}}
}
\put(105,125){
	\put(0,0){\includegraphics{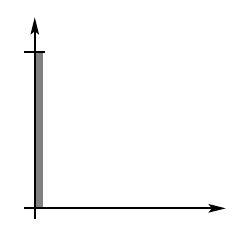}}
	\put(10,1){\makebox(0,0){$0$}}
%	\put(55,1){\makebox(0,0){$1$}}
	\put(2,10){\makebox(0,0){$0$}}
	\put(2,55){\makebox(0,0){$1$}}
	\put(74,9){\makebox(0,0){$R_1$}}
	\put(10,72){\makebox(0,0){$R_2$}}
	\put(37,-12){\makebox(0,0){(011)}}
}
\put(210,125){
	\put(0,0){\includegraphics{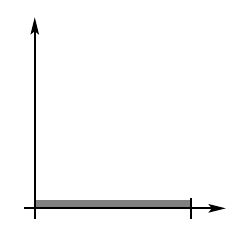}}
	\put(10,1){\makebox(0,0){$0$}}
	\put(55,1){\makebox(0,0){$1$}}
	\put(2,10){\makebox(0,0){$0$}}
%	\put(2,55){\makebox(0,0){$1$}}
	\put(74,9){\makebox(0,0){$R_1$}}
	\put(10,72){\makebox(0,0){$R_2$}}
	\put(37,-12){\makebox(0,0){(101)}}
}
\put(55,20){
	\put(0,0){\includegraphics{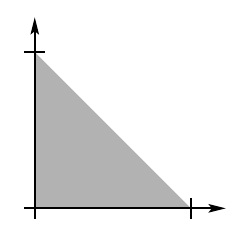}}
	\put(10,1){\makebox(0,0){$0$}}
	\put(55,1){\makebox(0,0){$1$}}
	\put(2,10){\makebox(0,0){$0$}}
	\put(2,55){\makebox(0,0){$1$}}
	\put(74,9){\makebox(0,0){$R_1$}}
	\put(10,72){\makebox(0,0){$R_2$}}
	\put(37,-12){\makebox(0,0){(111)}}
}
\put(160,20){
	\put(0,0){\includegraphics{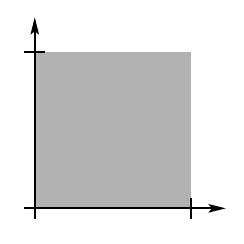}}
	\put(10,1){\makebox(0,0){$0$}}
	\put(55,1){\makebox(0,0){$1$}}
	\put(2,10){\makebox(0,0){$0$}}
	\put(2,55){\makebox(0,0){$1$}}
	\put(74,9){\makebox(0,0){$R_1$}}
	\put(10,72){\makebox(0,0){$R_2$}}
	\put(37,-12){\makebox(0,0){(112)}}
}
\end{picture}
}

The first case, (000), is that of a channel whose output provides
no useful information about any of its inputs; the second and
third, (011) and (101), are channels that provide complete information
about one of the inputs but nothing about the other; the fourth,
(111), is a pure contention channel; the last, (112), is one whose output
determines both inputs perfectly.
\end{remark}

Theorem~\ref{thm:polar} will be proved as a corollary to
Theorem~\ref{thm:limit} below. To prove the latter theorem, we need a few auxiliary results:

\begin{lemma}[\cite{Sasoglu2010}]
\label{lem:basic}
For any $\epsilon>0$, there is a $\delta:=\delta(\epsilon)>0$ such that if
\begin{itemize}
\item[(i)]
$Q\colon\fF_q\to\cB$ is a $q$-ary input channel with arbitrary output alphabet $\cB$, and
\item[(ii)]
$A_1,A_2,B_1,B_2$ are random variables jointly distributed as
$$
p_{A_1A_2B_1B_2}(a_1,a_2,b_1,b_2)=\tfrac1{q^2}Q(b_1|a_1+a_2)Q(b_2|a_2),\quad\text{and}
$$
\item[(iii)]
$I(A_2;B_1B_2A_1)-I(A_2;B_2)<\delta$,
\end{itemize}
then,
$$
I(A_2;B_2)\notin(\epsilon,1-\epsilon).
$$
Note that $\delta$ can be chosen irrespective of the alphabet $\cB$.
\end{lemma}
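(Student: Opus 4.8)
\newproof{Proof idea}
The plan is to reduce hypothesis~(iii) to a single conditional mutual information, prove an exact ($\delta=0$) form of the conclusion using that $q$ is prime, and then upgrade to the quantitative statement---including its uniformity in $\cB$---by a compactness argument.

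First I would rewrite the left-hand side of~(iii). Since $A_1$ is uniform and independent of $(A_2,B_2)$ we have $I(A_2;A_1\mid B_2)=0$, and since $B_1$ depends on $(A_1,A_2)$ only through $A_1+A_2$ the outputs $B_1,B_2$ are conditionally independent given $(A_1,A_2)$; expanding $I(A_2;B_1B_2A_1)$ by the chain rule in the order $B_2,A_1,B_1$ then gives
$$
I(A_2;B_1B_2A_1)-I(A_2;B_2)=I(A_2;B_1\mid A_1,B_2).
$$
It therefore suffices to find, for each $\epsilon>0$, a $\delta>0$ such that $I(A_2;B_1\mid A_1,B_2)<\delta$ implies $I(A_2;B_2)\notin(\epsilon,1-\epsilon)$.

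Next I would handle the exact case $I(A_2;B_1\mid A_1,B_2)=0$, i.e.\ $A_2\perp B_1\mid(A_1,B_2)$. Fix $(a_1,b_2)$ of positive probability and set $\pi:=p_{A_2\mid B_2}(\,\cdot\mid b_2)$, which does not depend on $a_1$; conditioned on $(A_1,B_2)=(a_1,b_2)$ the joint law of $(A_2,B_1)$ is $\pi(a_2)\,Q(b_1\mid a_1+a_2)$, so the asserted independence forces $Q(\,\cdot\mid a_1+a_2)$ to be the same for all $a_2\in\operatorname{supp}\pi$, and this holds for every $a_1$. Consequently either every posterior $p_{A_2\mid B_2}(\,\cdot\mid b_2)$ is a point mass---so $A_2$ is a function of $B_2$ and $I(A_2;B_2)=1$---or some posterior has two distinct atoms $a_2\neq a_2'$; putting $s:=a_2-a_2'\neq 0$ and letting $a_1$ range over $\fF_q$ gives $Q(\,\cdot\mid t)=Q(\,\cdot\mid t+s)$ for all $t\in\fF_q$, and as $q$ is prime $s$ generates $\fF_q$, whence $Q(\,\cdot\mid t)$ is independent of $t$ and $I(A_2;B_2)=0$. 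In either case $I(A_2;B_2)\in\{0,1\}$.

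Finally I would pass from the exact case to the quantitative one by compactness. Put each $q$-ary input channel into canonical form: under uniform input let $\Pi$ be the (random) posterior on the simplex $\Delta_{q-1}$ and $\mu$ its law, so that $\mu$ ranges over the weak-$*$ compact set of probability measures on $\Delta_{q-1}$ with barycenter the uniform vector. Because the posterior is a sufficient statistic---and likewise for the ``$+$'' channel underlying $I(A_2;B_1B_2A_1)$---both $I(A_2;B_2)$ and $I(A_2;B_1\mid A_1,B_2)$ depend on $Q$ only through $\mu$, and each is a weak-$*$ continuous functional of $\mu$, being an integral of a bounded continuous function of posteriors against $\mu$ and its tilted versions. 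If the lemma failed for some $\epsilon>0$ we could pick channels $\mu_n$ with $I(A_2;B_2)\in(\epsilon,1-\epsilon)$ and $I(A_2;B_1\mid A_1,B_2)\to 0$; a weak-$*$ limit $\mu_*$ would satisfy $I(A_2;B_2)\in[\epsilon,1-\epsilon]$ and $I(A_2;B_1\mid A_1,B_2)=0$, contradicting the previous step, and since the argument lives on the fixed compact simplex $\Delta_{q-1}$ the resulting $\delta$ is independent of $\cB$. I expect the continuity claim to be the one delicate point: the posterior of $A_2$ given both observations is a ratio that can degenerate to $0/0$ on the boundary of the simplex, and one must check that conditional entropy is nonetheless a bounded continuous functional of the (continuously varying) joint law of $(A_1,A_2)$ and the two posteriors. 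A purely quantitative route---propagating a Pinsker-type estimate through the algebra of the exact case---avoids compactness but is considerably more tedious.
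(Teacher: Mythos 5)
A preliminary remark: the paper does not prove Lemma~\ref{lem:basic} at all---it imports it from \cite{Sasoglu2010}---so there is no internal proof to compare against; I am judging your argument on its own and against the cited source. Your proof is correct and takes a genuinely different, softer route. The chain-rule reduction to $I(A_2;B_1\mid A_1,B_2)$ is right (it needs only $A_1\perp(A_2,B_2)$), and the exact case is the crux: you invoke primality exactly where it is needed (the difference $s=a_2-a_2'$ generates $\fF_q$), which is also why the statement fails for composite $q$. The compactness step is legitimate because both functionals factor through the law $\mu$ of the posterior under uniform input, and your one flagged worry---continuity---dissolves if you use the identity $I(A_2;B_1B_2A_1)-I(A_2;B_2)=I(A_2;B_2)-I(A_1;B_1B_2)$ and observe that the posterior of $A_1$ given $(B_1,B_2)$ is the cyclic correlation $a_1\mapsto\sum_{a_2}\nu_{b_1}(a_1+a_2)\,\nu_{b_2}(a_2)$ of the two single-output posteriors: this is a polynomial, automatically normalized map of the product of simplices into the simplex, with no $0/0$ degeneracy, so $I(A_1;B_1B_2)$ is the integral of a bounded continuous function against $\mu\otimes\mu$ and hence weak-$*$ continuous. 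The remaining housekeeping is that the exact-case argument must be run on the limiting object, a channel whose output alphabet is the simplex rather than a finite set; ``positive probability'' becomes ``positive measure,'' and one pigeonholes over the finitely many possible differences $s$ to extract a single $s$ working on a positive-measure set of outputs. What you give up relative to \cite{Sasoglu2010}, which establishes a quantitative entropy inequality for sums of $q$-ary variables and thereby an explicit $\delta(\epsilon)$, is effectiveness of $\delta$; the lemma as stated (and as used in this paper) does not require it.
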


\begin{corollary}
\label{cor:i1}
For any $\epsilon>0$ there exists a $\delta>0$ such that if $P$ is
a two-user $q$-ary input multiple-access channel with
$$
I^{\paren{1}}(P^+)-I^{\paren{1}}(P)<\delta,
$$
then, $I^{\paren{1}}(P)\notin(\epsilon,1-\epsilon)$. Similarly, if $P$ is such that 
$$
I^{\paren{2}}(P^+)-I^{\paren{2}}(P)<\delta,
$$
then, $I^{\paren{2}}(P)\notin(\epsilon,1-\epsilon)$. 
\end{corollary}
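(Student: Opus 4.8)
The plan is to read the corollary off Lemma~\ref{lem:basic} after choosing the auxiliary single-user channel correctly. For the first claim I would define $Q\colon\fF_q\to\cY\times\fF_q$ by
$$
Q(y,w\mid x)=\tfrac1q\,P(y\mid x,w);
$$
this is a channel because $\sum_{y,w}Q(y,w\mid x)=\sum_{w}\tfrac1q=1$, and informally it appends to the output of $P$ an independent, uniform copy of the second user's input. Let $A_1,A_2$ be independent and uniform on $\fF_q$ and, given $(A_1,A_2)$, let $B_1,B_2$ be conditionally independent with $B_1\sim Q(\cdot\mid A_1+A_2)$ and $B_2\sim Q(\cdot\mid A_2)$, so that hypotheses (i) and (ii) of Lemma~\ref{lem:basic} hold. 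Writing $B_1=(Y_1,W_1)$ and $B_2=(Y_2,W_2)$, the symbols $W_1,W_2$ are independent uniform and independent of $(A_1,A_2)$, while $Y_1$ is an output of $P$ driven by the input pair $(A_1+A_2,W_1)$ and $Y_2$ an output of $P$ driven by $(A_2,W_2)$.

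The next step is to verify the two mutual-information identities that turn Lemma~\ref{lem:basic} into the corollary. Since $Y_2\mid(A_2,W_2)\sim P$ with $A_2,W_2$ independent uniform, $I(A_2;B_2)=I(A_2;Y_2W_2)=I^{\paren{1}}(P)$. For the other identity I would match the variables above with those of Section~\ref{sec:polarization}: put $U_1=A_1$, $U_2=A_2$, so $X_1=U_1+U_2=A_1+A_2$ and $X_2=U_2=A_2$, and let $(V_1,V_2)$ be the image of $(W_1,W_2)$ under the bijection $V_1=W_1-W_2$, $V_2=W_2$, so that $W_1=V_1+V_2$, $W_2=V_2$. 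Then $(U_1,U_2,V_1,V_2)$ are independent uniform and the joint law of $(U_1,U_2,V_1,V_2,Y_1,Y_2)$ is exactly the one used in Section~\ref{sec:polarization}. Because the map $(W_1,W_2)\mapsto(V_1,V_2)$ is a bijection involving neither the $U$'s nor the $Y$'s,
$$
I(A_2;B_1B_2A_1)=I(U_2;Y_1Y_2W_1W_2U_1)=I(U_2;Y_1Y_2V_1V_2U_1)=I^{\paren{1}}(P^+),
$$
the last equality being the definition of $I^{\paren{1}}(P^+)$ (cf.\ the paragraph preceding the definition of $P^\pm$). Hence hypothesis (iii) of Lemma~\ref{lem:basic} reads precisely $I^{\paren{1}}(P^+)-I^{\paren{1}}(P)<\delta$ (a nonnegative quantity, consistently with $I^{\paren{1}}(P^+)\geq I^{\paren{1}}(P)$), and its conclusion is $I^{\paren{1}}(P)=I(A_2;B_2)\notin(\epsilon,1-\epsilon)$, which is the first assertion. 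The second assertion follows by running the identical argument with the two users interchanged, i.e., with $Q(y,x\mid w)=\tfrac1q P(y\mid x,w)$.

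The substantive content is entirely in Lemma~\ref{lem:basic}; the work here is only in the translation, so the main thing to watch is the bookkeeping: one must check that the auxiliary uniform symbols $W_1,W_2$ hidden inside the output of $Q$ really are independent of $(A_1,A_2)$ (and appear on the correct side of each conditioning), and that the change of variables $(W_1,W_2)\leftrightarrow(V_1,V_2)$ preserves every mutual information in sight. One point worth stressing is that Lemma~\ref{lem:basic} provides a $\delta(\epsilon)$ independent of the output alphabet $\cB$; this is exactly what lets a single $\delta$ work uniformly over all channels $P$, whose output alphabet $\cY$—and hence the alphabet $\cB=\cY\times\fF_q$ of the auxiliary channel $Q$—is arbitrary.
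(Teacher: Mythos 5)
Your proof is correct and is essentially the paper's own argument: both reduce to Lemma~\ref{lem:basic} via the auxiliary single-user channel $Q(y,v\mid u)=\tfrac1q P(y\mid u,v)$ and the linear change of variables $(W_1,W_2)\leftrightarrow(V_1,V_2)$ that decouples the second user's symbols (the paper performs this substitution on the joint law of $P^+$, you run it in the reverse direction from the product law of Lemma~\ref{lem:basic}, which is the same computation). Your identification of the relevant mutual informations and the remark about $\delta$ being uniform in the output alphabet are both accurate.
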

\begin{proof}
It suffices to prove the first claim. To that end, note that
$I^{\paren{1}}(P)=I(U_2;Y_2V_2)$ and
$I^{\paren{1}}(P^+)=I(U_2;Y_1Y_2U_1V_1V_2)$,
where
$$
p_{U_1V_1U_2V_2Y_1Y_2}(u_1,v_1,u_2,v_2,y_1,y_2)=\tfrac{1}{q^4} P(y_1\mid
u_1+u_2,v_1+v_2)P(y_2\mid u_2,v_2).
$$
We have by hypothesis that
$$
\delta>%I^{\paren{1}}(P^+)-I^{\paren{1}}(P)
I(U_2;Y_1Y_2V_1V_2U_1)-I(U_2;Y_2V_2).
$$
It can easily be checked that the values of the above mutual informations remain
unaltered if evaluated under the joint distribution
$$
q_{U_1V_1U_2V_2Y_1Y_2}(u_1,v_1,u_2,v_2,y_1,y_2)=\tfrac{1}{q^4} P(y_1\mid
u_1+u_2,v_1)P(y_2\mid u_2,v_2).
$$
Defining $A_i=U_i$, $B_i=(Y_i,V_i)$ and
$Q(y,v|u)=\tfrac12 P(y|u,v)$, one can then write
$$
q_{A_1A_2B_1B_2}(a_1,a_2,b_1,b_2) = \tfrac1{q^2} Q(b_1\mid a_1+a_2)Q(b_2\mid a_2).
$$
Applying Lemma~\ref{lem:basic} now yields the claim.
\end{proof}

\begin{lemma}
\label{lem:i12}
For any $\epsilon>0$ there exists a $\delta>0$ such that whenever $P$ is
a two-user $q$-ary input multiple-access channel with $I^{\paren{12}}(P^+)-I^{\paren{12}}(P)<\delta$, then
$$
I^{\paren{12}}(P) - I^{\paren{j}}(P) \notin(\epsilon,1-\epsilon) \quad \text{ for } j=1,2.
$$
\end{lemma}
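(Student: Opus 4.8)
The plan is to reduce this to Lemma~\ref{lem:basic}, exactly as in the proof of Corollary~\ref{cor:i1}. First I would pin down what the two quantities in the conclusion are: with $X,W$ independent and uniform on $\fF_q$ and $Y\sim P(\cdot\mid x,w)$, the chain rule together with $I(X;W)=0$ gives $I^{\paren{12}}(P)-I^{\paren{2}}(P)=I(XW;Y)-I(W;Y\mid X)=I(X;Y)$, and symmetrically $I^{\paren{12}}(P)-I^{\paren{1}}(P)=I(W;Y)$. So it suffices to show that each of $I(X;Y)$ and $I(W;Y)$ lies outside $(\epsilon,1-\epsilon)$, and by the symmetry between the two users I would only argue the case of $I(X;Y)$. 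Throughout I would use the variables $U_1,V_1,U_2,V_2,Y_1,Y_2$ of Section~\ref{sec:polarization}; in that notation $I^{\paren{12}}(P)=I(U_2V_2;Y_2)$, $I^{\paren{12}}(P^+)=I(U_2V_2;Y_1Y_2U_1V_1)$, and $I(X;Y)=I(U_2;Y_2)$.

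The next step is to convert the hypothesis into a single-letter mutual-information gap of the type Lemma~\ref{lem:basic} expects. The chain rule and the independence of $(U_1,V_1)$ from $(U_2,V_2,Y_2)$ give $I^{\paren{12}}(P^+)-I^{\paren{12}}(P)=I(U_2V_2;Y_1\mid Y_2U_1V_1)$. Expanding the chain rule with $U_2$ first, this is $\ge I(U_2;Y_1\mid Y_2U_1V_1)$. Because $V_1$ is independent of $(U_1,U_2,Y_2)$, the elementary fact that $I(A;B\mid C,D)\ge I(A;B\mid C)$ whenever $D$ is independent of $(A,C)$ (expand $I(A;BD\mid C)$ in the two orders) yields $I(U_2;Y_1\mid Y_2U_1V_1)\ge I(U_2;Y_1\mid Y_2U_1)$. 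Finally $U_1$ is independent of $(U_2,Y_2)$, so $I(U_2;Y_1\mid Y_2U_1)=I(U_2;Y_1Y_2U_1)-I(U_2;Y_2)$. Chaining these inequalities, the hypothesis $I^{\paren{12}}(P^+)-I^{\paren{12}}(P)<\delta$ forces $I(U_2;Y_1Y_2U_1)-I(U_2;Y_2)<\delta$.

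It then remains to exhibit the Ar{\i}kan product structure. Let $Q(y\mid x):=\sum_{w\in\fF_q}\tfrac1q P(y\mid x,w)$ be the marginal single-user channel $X\to Y$. Substituting $w_1:=v_1+v_2$---a bijection of $(v_1,v_2)$ for each fixed $v_2$---in the expression for $p_{U_1U_2Y_1Y_2}$ makes the double sum factor, so that $(U_1,U_2,Y_1,Y_2)$ is jointly distributed as $\tfrac1{q^2}Q(y_1\mid u_1+u_2)Q(y_2\mid u_2)$; in particular the correlation between $Y_1$ and $Y_2$ induced by the shared $V_2$ disappears. Hence Lemma~\ref{lem:basic} applies with this $Q$, $A_i:=U_i$, $B_i:=Y_i$: given $\epsilon$, choose $\delta=\delta(\epsilon)$ as in that lemma; its hypothesis~(iii) is precisely $I(U_2;Y_1Y_2U_1)-I(U_2;Y_2)<\delta$, which the previous paragraph supplies, so $I(U_2;Y_2)=I(X;Y)\notin(\epsilon,1-\epsilon)$. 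Running the same argument with the roles of the $U$'s and $V$'s interchanged (and with the marginal channel $W\to Y$) gives $I(W;Y)\notin(\epsilon,1-\epsilon)$, which is the claim.

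I expect that there is no genuinely hard step here; the care is all in the bookkeeping. The two points to get exactly right are that the substitution $w_1=v_1+v_2$ really does remove the dependence between $Y_1$ and $Y_2$, so that $(Y_1,Y_2)$ has precisely the product-of-$Q$ form Lemma~\ref{lem:basic} requires, and that each ``drop a conditioning variable'' step moves in the inequality-preserving direction---both of which amount to reading off the correct independence relations among $U_1,V_1,U_2,V_2$ and the two uses of the channel. No real computation is involved.
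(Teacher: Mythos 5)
Your proof is correct and follows essentially the same route as the paper's: reduce the hypothesis via the chain rule to $I(U_2;Y_1Y_2U_1)-I(U_2;Y_2)<\delta$ and invoke Lemma~\ref{lem:basic} with the marginalized channel $Q(y\mid u)=\sum_v\tfrac1q P(y\mid u,v)$, identifying $I(U_2;Y_2)$ with $I^{\paren{12}}(P)-I^{\paren{j}}(P)$. Your version is in fact slightly more careful than the paper's on two points: you verify explicitly that $(U_1,U_2,Y_1,Y_2)$ has the required product form $\tfrac1{q^2}Q(y_1\mid u_1+u_2)Q(y_2\mid u_2)$, and you get the index right ($I(U_2;Y_2)=I^{\paren{12}}(P)-I^{\paren{2}}(P)$, where the paper writes $I^{\paren{1}}$ --- harmless by symmetry).
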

\begin{proof}
By symmetry, it suffices to prove the claim for $j=1$. Choose $\delta$ so that $\delta<\epsilon$ and $\delta<\delta(\epsilon)$ of
Lemma~\ref{lem:basic}.
Note that
\begin{align*}
\delta &> I^{\paren{12}}(P^+)-I^{\paren{12}}(P)\\
&= I(U_2V_2;Y_1Y_2U_1V_1)-I(U_2V_2;Y_2)\\
&= I(U_2V_2;Y_1U_1V_1|Y_2)\\
&\geq I(U_2;Y_1U_1|Y_2)\\
&=I(U_2;Y_1Y_2U_1)-I(U_2;Y_2).
\end{align*}
Applying Lemma~\ref{lem:basic} with $A_i=U_i$, $B_i=Y_i$ and
$Q(y|u)=\sum_v \tfrac1q P(y|u,v)$ we conclude that
$I(U_2;Y_2)\notin(\epsilon,1-\epsilon)$.
Since $I(U_2;Y_2)=I^{\paren{12}}(P)-I^{\paren{1}}(P)$, the claim follows.
\end{proof}

Suppose $P$ is a two-user $q$-ary input MAC.  Let $B_1,B_2,\dots$ be an i.i.d.\ sequence of random
variables taking values in the set $\{-,+\}$, with
$\Pr(B_1=\mathord{-})=\Pr(B_1=\mathord{+})=1/2$.
Define a \emph{MAC-valued} random process $\{P_\ell\colon \ell\geq0\}$ via
\begin{align}
\label{eqn:process-P}
P_0:=P,\quad P_\ell:=P_{\ell-1}^{B_\ell},\: \ell\geq 1.
\end{align}
Further define random processes $\{I^{\paren{1}}_\ell\colon \ell\geq0\}$,
$\{I^{\paren{2}}_\ell\colon \ell\geq0\}$ and $\{I^{\paren{12}}_\ell\colon \ell\geq0\}$ as
\begin{align}
\label{eqn:process-I}
I^{\paren{1}}_\ell:=I^{\paren{1}}(P_\ell),\quad
I^{\paren{2}}_\ell:=I^{\paren{2}}(P_\ell),\quad\text{and}\quad
I^{\paren{12}}_\ell:=I^{\paren{12}}(P_\ell).
\end{align}

\begin{lemma}
The processes $\{I^{\paren{1}}_\ell\colon \ell\geq0\}$ and $\{I^{\paren{2}}_\ell\colon \ell\geq0\}$
are bounded supermartingales, the process $\{I^{\paren{12}}_\ell\colon \ell\geq0\}$
is a bounded martingale.
\end{lemma}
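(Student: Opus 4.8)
The plan is to work with the natural filtration $\cF_\ell := \sigma(B_1,\dots,B_\ell)$ (with $\cF_0$ trivial). Boundedness is immediate: since all logarithms are to the base $q$, for every $q$-ary input MAC $Q$ we have $I^{\paren{1}}(Q),\,I^{\paren{2}}(Q)\in[0,1]$ and $I^{\paren{12}}(Q)\in[0,2]$, so each of the three processes stays in a fixed bounded interval; in particular each $I^{\paren{\alpha}}_\ell$ is an integrable, $\cF_\ell$-measurable random variable. The only real content is the one-step (super)martingale relations, and these follow by unwrapping $P_\ell=P_{\ell-1}^{B_\ell}$ together with the identities~\eqref{eq:s1}, \eqref{eq:s2} and~\eqref{eq:s12}.

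First I would observe that, because $B_\ell$ is independent of $\cF_{\ell-1}$ and equals $-$ or $+$ with probability $1/2$ each while $P_{\ell-1}$ is $\cF_{\ell-1}$-measurable,
$$
E\bigl[I^{\paren{\alpha}}_\ell \,\big|\, \cF_{\ell-1}\bigr]
= \tfrac12\,I^{\paren{\alpha}}\!\bigl(P_{\ell-1}^{-}\bigr)
+ \tfrac12\,I^{\paren{\alpha}}\!\bigl(P_{\ell-1}^{+}\bigr),
\qquad \alpha\in\{1,2,12\}.
$$
Hence it suffices to prove, for an arbitrary $q$-ary input MAC $P$, the three relations $\tfrac12 I^{\paren{1}}(P^-)+\tfrac12 I^{\paren{1}}(P^+)\le I^{\paren{1}}(P)$, the analogous inequality for $I^{\paren{2}}$, and $\tfrac12 I^{\paren{12}}(P^-)+\tfrac12 I^{\paren{12}}(P^+)= I^{\paren{12}}(P)$.

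For the two supermartingale claims I would read the inequality straight off~\eqref{eq:s1}: with $(U_1,U_2,V_1,V_2)$ independent and uniform as in the construction, the chain rule there gives $2I^{\paren{1}}(P)\ge I(U_1;Y_1Y_2V_1)+I(U_2;Y_1Y_2V_1V_2U_1)$, and these two terms are exactly $I^{\paren{1}}(P^-)$ (for the MAC $P^-\colon U_1V_1\to Y_1Y_2$ one has $I^{\paren{1}}(P^-)=I(U_1;Y_1Y_2V_1)$) and $I^{\paren{1}}(P^+)$ (for the MAC $P^+\colon U_2V_2\to Y_1Y_2U_1V_1$ one has $I^{\paren{1}}(P^+)=I(U_2;Y_1Y_2U_1V_1V_2)$); dividing by $2$ yields the first supermartingale inequality. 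The argument for $I^{\paren{2}}$ is identical with~\eqref{eq:s2} in place of~\eqref{eq:s1}. For the martingale claim I would use~\eqref{eq:s12}, which is an \emph{equality}: $2I^{\paren{12}}(P)=I(U_1V_1;Y_1Y_2)+I(U_2V_2;Y_1Y_2U_1V_1)=I^{\paren{12}}(P^-)+I^{\paren{12}}(P^+)$, identifying the terms via the definitions of $P^-$ and $P^+$; dividing by $2$ gives $E[I^{\paren{12}}_\ell\mid\cF_{\ell-1}]=I^{\paren{12}}_{\ell-1}$.

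There is no serious obstacle: the proof is essentially bookkeeping with chain rules. The one point needing care is matching the conditional-mutual-information terms in~\eqref{eq:s1}--\eqref{eq:s12} to the quantities $I^{\paren{\alpha}}(P^\pm)$, i.e.\ keeping straight which pair of symbols plays the role of the two MAC inputs and which symbols form the (genie-augmented) output for each of $P^-$ and $P^+$. It is also worth noting \emph{why} two of the processes are only supermartingales: the loss of equality in~\eqref{eq:s1} and~\eqref{eq:s2} is exactly the step where $V_2$ (resp.\ $U_2$) is dropped from the conditioning, $I(U_1;Y_1Y_2V_1V_2)\ge I(U_1;Y_1Y_2V_1)$, whereas the sum-rate chain rule~\eqref{eq:s12} needs no such step and hence gives a genuine martingale.
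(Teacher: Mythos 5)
Your proof is correct and follows the paper's own argument: boundedness from $I^{\paren{1}},I^{\paren{2}}\in[0,1]$, $I^{\paren{12}}\in[0,2]$, and the (super)martingale relations read off from~\eqref{eq:s1}, \eqref{eq:s2} and~\eqref{eq:s12} after conditioning on $B_\ell$. You simply spell out the filtration, the conditional-expectation step, and the identification of the chain-rule terms with $I^{\paren{\alpha}}(P^\pm)$, all of which the paper leaves implicit.
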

\begin{proof}
Since $P_\ell$ is a $q$-ary input MAC, $I^{\paren{1}}_\ell$ and $I^{\paren{2}}_\ell$ take values is
$[0,1]$ and $I^{\paren{12}}_\ell$ takes values in $[0,2]$, and thus the processes are
bounded.  The martingale claims follow from~\eqref{eq:s1}, \eqref{eq:s2}
and~\eqref{eq:s12} respectively.
\end{proof}

\begin{theorem}
\label{thm:limit}
The process $(I^{\paren{1}}_\ell,I^{\paren{2}}_\ell,I^{\paren{12}}_\ell)$ converges almost surely,
and the limit
$$
(I^{\paren{1}}_\infty,I^{\paren{2}}_\infty,I^{\paren{12}}_\infty):=
	\lim_{\ell\to\infty}(I^{\paren{1}}_\ell,I^{\paren{2}}_\ell,I^{\paren{12}}_\ell)
$$
belongs to the set $\{(0,0,0),(0,1,1), (1,0,1), (1,1,1), (1,1,2)\}$
with probability $1$.
\end{theorem}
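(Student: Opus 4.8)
The plan is to exploit the martingale/supermartingale structure established in the previous lemma. First I would invoke the martingale convergence theorem: since $\{I^{\paren{1}}_\ell\}$ and $\{I^{\paren{2}}_\ell\}$ are bounded supermartingales and $\{I^{\paren{12}}_\ell\}$ is a bounded martingale, each converges almost surely, hence so does the triple $(I^{\paren{1}}_\ell,I^{\paren{2}}_\ell,I^{\paren{12}}_\ell)$. Let the limit be $(I^{\paren{1}}_\infty,I^{\paren{2}}_\infty,I^{\paren{12}}_\infty)$. The content of the theorem is then to pin down which triples can occur as limits.

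The key observation is that almost sure convergence forces the one-step increments to vanish in $L^1$ (or at least along the sequence, $\mathbb{E}|I^{\paren{\alpha}}_{\ell+1}-I^{\paren{\alpha}}_\ell|\to 0$), and in particular the ``$+$'' increments
$I^{\paren{\alpha}}(P_\ell^+)-I^{\paren{\alpha}}(P_\ell)$ tend to $0$ in probability for each $\alpha\in\{1,2,12\}$. Now apply the three auxiliary results to a typical $P_\ell$ far out in the process. Corollary~\ref{cor:i1} tells us that small increment in $I^{\paren{1}}$ forces $I^{\paren{1}}(P_\ell)$ out of $(\epsilon,1-\epsilon)$, so $I^{\paren{1}}_\infty\in\{0,1\}$; likewise $I^{\paren{2}}_\infty\in\{0,1\}$. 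Lemma~\ref{lem:i12} tells us that small increment in $I^{\paren{12}}$ forces $I^{\paren{12}}(P_\ell)-I^{\paren{j}}(P_\ell)$ out of $(\epsilon,1-\epsilon)$ for $j=1,2$, so in the limit $I^{\paren{12}}_\infty-I^{\paren{1}}_\infty\in\{0,1\}$ and $I^{\paren{12}}_\infty-I^{\paren{2}}_\infty\in\{0,1\}$. To make this rigorous I would argue by contradiction: if with positive probability the limit triple violated one of these integrality conditions, then on that event $I^{\paren{\alpha}}_\ell$ would be trapped in $(\epsilon,1-\epsilon)$ (resp. $I^{\paren{12}}_\ell-I^{\paren{j}}_\ell$ would be), for all large $\ell$, yet the relevant increment would have to be $\geq\delta(\epsilon)$ infinitely often with probability bounded below, contradicting convergence. (Alternatively one can phrase it cleanly via the fact that for a convergent bounded sequence the increments are summable in expectation when one conditions appropriately; Ar{\i}kan's original argument does this with the Borel--Cantelli style bookkeeping, and the same template applies here.)

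It remains to enumerate the triples $(a,b,c)$ with $a,b\in\{0,1\}$, $c-a\in\{0,1\}$, $c-b\in\{0,1\}$, and also $c\in[0,2]$ with $\max\{a,b\}\leq c\leq a+b$ (the polymatroid constraints, valid in the limit by continuity). From $a,b\in\{0,1\}$ there are four cases for $(a,b)$: $(0,0)$ forces $c=0$, giving $(0,0,0)$; $(0,1)$ forces $c\in\{1\}$ from $\max\le c\le a+b$, giving $(0,1,1)$; symmetrically $(1,0)$ gives $(1,0,1)$; and $(1,1)$ forces $c\in\{1,2\}$, both consistent with $c-a,c-b\in\{0,1\}$, giving $(1,1,1)$ and $(1,1,2)$. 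That is exactly the set $M$, proving the claim.

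The main obstacle is the middle step: turning ``almost sure convergence'' into ``the forbidden interval is avoided in the limit'' in a way that correctly handles the two different gap functionals ($I^{\paren{\alpha}}$ itself for $\alpha=1,2$, and the differences $I^{\paren{12}}-I^{\paren{j}}$ for the sum-rate coordinate) simultaneously. The cleanest route is to note that $\mathbb{E}[I^{\paren{\alpha}}_{\ell+1}-I^{\paren{\alpha}}_\ell\mid\mathcal{F}_\ell]\to 0$ a.s. along a subsequence, hence the ``$+$''-branch increment (which is nonnegative) tends to $0$ a.s. along that subsequence for each $\alpha$; then Corollary~\ref{cor:i1} and Lemma~\ref{lem:i12}, applied with an arbitrary $\epsilon$ and its associated $\delta(\epsilon)$, show every limit point of $I^{\paren{1}}_\ell$, $I^{\paren{2}}_\ell$, $I^{\paren{12}}_\ell-I^{\paren{1}}_\ell$, $I^{\paren{12}}_\ell-I^{\paren{2}}_\ell$ lies outside $(\epsilon,1-\epsilon)$; letting $\epsilon\to 0$ and using that the processes actually converge finishes it. Everything else is the elementary case enumeration above together with the bookkeeping of Lemma~\ref{lem:basic}'s uniformity in the output alphabet, which is what lets us apply it to the ever-growing alphabets of the $P_\ell$.
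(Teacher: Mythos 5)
Your proposal is correct and follows essentially the same route as the paper: martingale convergence, then $E\bigl[|I^{\paren{\alpha}}_{\ell+1}-I^{\paren{\alpha}}_\ell|\bigr]\to0$ forces the nonnegative ``$+$''-branch increments to vanish in probability, after which Corollary~\ref{cor:i1} and Lemma~\ref{lem:i12} pin the limits to $\{0,1\}$ and the case enumeration with the polymatroid constraints yields the five points. The only (immaterial) differences are organizational—the paper dispatches the cases $(0,0),(0,1),(1,0)$ using the polymatroid bounds alone and invokes Lemma~\ref{lem:i12} only on $A_{11}$—and a small imprecision in your closing remark: for the supermartingales $\mathbb{E}[I^{\paren{\alpha}}_{\ell+1}-I^{\paren{\alpha}}_\ell\mid\mathcal{F}_\ell]$ is a sum of a nonnegative and a nonpositive term, so one should work with the absolute increment as in your main argument.
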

\begin{proof}
Let $(\Omega,\Pr,\cF)$ be the probability space these processes are
defined in.  Let
$$
A:=\bigl\{\omega\in\Omega\colon\text{$\lim_{\ell\to\infty}
	I^{\paren{\alpha}}_\ell$ exists for each $\alpha\in\{1,2,12\}$}\bigr\}.
$$
The almost sure convergence of $I^{\paren{1}}_\ell$ and $I^{\paren{2}}_\ell$ follow
from them being bounded supermartingales, almost sure convergence of
$I^{\paren{12}}_\ell$ follows form it being a bounded martingale.  Thus, $\Pr(A)=1$,
and it remains to show that the joint limit belongs to the set claimed.

To that end we will first show that $I^{\paren{1}}_\infty\in\{0,1\}$ a.s.  Since
$I^{\paren{1}}_\ell$ converges a.s., $\lim_{\ell\to\infty}
|I^{\paren{1}}_{\ell+1}-I^{\paren{1}}_\ell|=0$ a.s.  Since
$|I^{\paren{1}}_{\ell+1}-I^{\paren{1}}_\ell|$ is bounded (by 1), it follows that
$\lim_{\ell\to\infty} E\bigl[|I^{\paren{1}}_{\ell+1}-I^{\paren{1}}_\ell|\bigr]=0$.  But
$$
E\bigl[|I^{\paren{1}}_{\ell+1}-I^{\paren{1}}_\ell|\bigr]\geq
	\tfrac12\bigl[I^{\paren{1}}(P_\ell^+)-I^{\paren{1}}(P_\ell)\bigr],
$$
and we see that $\lim_{\ell\to\infty} I^{\paren{1}}(P_\ell^+)-I^{\paren{1}}(P_\ell)=0$.
From Corollary~\ref{cor:i1} we conclude that
$\lim_{\ell\to\infty}I^{\paren{1}}_\ell\in\{0,1\}$.

Swapping the roles of the two users yields $I^{\paren{2}}_\infty\in\{0,1\}$ a.s.  We
thus find that $(I^{\paren{1}}_\infty,I^{\paren{2}}_\infty)$ is equal to either $(0,0)$,
$(0,1)$, $(1,0)$, or $(1,1)$.  Denoting the set of $\omega\in A$ for which
$I^{\paren{1}}_\infty=a$, $I^{\paren{2}}_\infty=b$ by $A_{ab}$, we see that
$A=A_{00}\cup A_{01}\cup A_{10}\cup A_{11}$.  Since
$$
\max\{I^{\paren{1}},I^{\paren{2}}\}\leq I^{\paren{12}}\leq I^{\paren{1}}+I^{\paren{2}},
$$
we conclude that the value of $I^{\paren{12}}_\infty$ in $A_{00}$, $A_{01}$ and $A_{10}$ is $0$, $1$, and $1$ respectively.

All that remains now is to show that $I^{\paren{12}}_\infty$ belongs to $\{1,2\}$
for $\omega\in A_{11}$.  To that end note that for any $\omega\in A_{11}$,
\begin{enumerate}
\item[(i)] $\lim_{\ell\to\infty} I^{\paren{1}}(P_\ell)=1$,
$\lim_{\ell\to\infty} I^{\paren{2}}(P_\ell)=1$, and
\item[(ii)] $\lim_{\ell\to\infty} I^{\paren{12}}(P_\ell)$ exists and thus
$\lim_{\ell\to\infty}\bigl|I^{\paren{12}}(P_{\ell+1})-I^{\paren{12}}(P_\ell)\bigr|=0$.
\end{enumerate}
But
$$
\bigl|I^{\paren{12}}(P_{\ell+1})-I^{\paren{12}}(P_\ell)\bigr|=I^{\paren{12}}(P_\ell^+)-I^{\paren{12}}(P_\ell),
$$
and thus $\lim_{\ell\to\infty} I^{\paren{12}}(P_\ell^+)-I^{\paren{12}}(P_\ell)=0$.
Now Lemma~\ref{lem:i12} lets us conclude that
$\lim_{\ell\to\infty}I^{\paren{12}}(P_\ell)\in\{1,2\}$.
\end{proof}

\begin{newproof}{Proof of Theorem~\ref{thm:polar}}
When the processes $P_\ell$ and
$(I_\ell^{\paren{1}},I_\ell^{\paren{2}},I_\ell^{\paren{12}})$, $\ell=0,1,\dotsc$  are defined as in \eqref{eqn:process-P} and \eqref{eqn:process-I}, respectively, we have 
$$
\Pr\left[d((I_\ell^{\paren{1}},I_\ell^{\paren{2}},I_\ell^{\paren{12}}),M)\ge\delta\right] =
\frac1{2^\ell}\#\left\{\bs\in\{-,+\}^\ell\colon
d(\cJ(P^\bs),M)\ge\delta\right\}.
$$
The claim then follows from Theorem~\ref{thm:limit}.
\end{newproof}

\section{Rate of polarization}
We have seen that any $q$-ary input MAC can be polarized to a set of
five extremal MACs, by recursively applying the channel
combining/splitting procedure of Section~\ref{sec:polarization}. Furthermore,
Remark~\ref{rmk:limiting} suggests a natural scheme to exploit this
phenomenon---polar coding \cite{Arikan2009}: one can hope to
communicate reliably by sending uncoded information over the reliable
channels, and not sending any information over the others. In this section,
we will formalize this intuition, showing that such a coding scheme achieves points on the dominant
face of $\cI(P)$.

We first introduce some notation: Given a $q$-ary input multiple-access channel
$P$, define two point-to-point channels
$P[U]\colon\mathbb{F}_q\to\cY$ and $P[U\mid
V]\colon\mathbb{F}_q\to\cY\times\fF_q$ through
$$
P[U](y\mid u) = \tfrac1q \sum_v P(y\mid u,v)
$$
$$
P[U\mid V](y,v\mid u) = \tfrac1q P(y\mid u,v)
$$
That is, $P[U]$ is the channel $U\to Y$, and $P[U\mid V]$ is the
channel $U\to YV$. Define $P[V]$ and $P[V\mid U]$ analogously.
Also, for every $\alpha,\gamma\in\fF_q$ define the channels $P[\alpha,\gamma]\colon\fF_q\to\cY$
through
$$
P[\alpha,\gamma](y\mid s) = \tfrac1q \sum_{\substack{u,v\colon\\
\alpha u+\gamma
v=s}} P(y\mid u,v) 
$$
That is, $P[\alpha,\gamma]$ is the channel 
$\alpha U+\gamma
V\to Y$. 

Given a point-to-point channel $Q\colon\fF_q\to\cY$, let $P_e(Q)$ denote its average
probability of error with uniform input distribution and the optimal
(ML) decision rule. Also let $I(Q)$ denote the mutual information developed
across $Q$ with uniform inputs. That is, 
$$
I(Q)= \tfrac1q \sum_{x,y} Q(y\mid x) \log \frac{Q(y\mid
x)}{\tfrac1q\sum_{x'} Q(y\mid x')}.
$$
Finally let $Z(Q)$ define the
\emph{Bhattacharyya parameter} of $Q$, defined as
$$
Z(Q)=\tfrac1{q(q-1)}\sum_{x\neq x'}\sum_{y\in\cB}
\sqrt{Q(y\mid x)Q(y\mid x')}.
$$
It is known (see \cite{STA2009}) that $P_e(Q)\le qZ(Q)$.

We are now ready to describe the encoding rule: Fix $\ell$ and let $n=2^\ell$. Let $B_n$ denote the $n\times n$ permutation matrix called the `bit reversal' operator in \cite{Arikan2009}, and let 
$G_n = \left[\begin{smallmatrix} 1 & 0 \\ 1 & 1\end{smallmatrix}\right]^{\otimes\ell}$ denote the $\ell$th Kronecker power of the matrix
$\left[\begin{smallmatrix} 1 & 0 \\ 1 & 1\end{smallmatrix}\right]$.
Put $U^n:=(U_1,\dots,U_n)$ and $V^n=(V_1,\dots,V_n)$ into one-to-one 
correspondence with $X^n=(X_1,\dots,X_n)$ and $W^n=(W_1,\dots,W_n)$ via 
\begin{align*}
X^n &=U^nB_n G_n, \\
W^n &=V^nB_n G_n.
\end{align*}
Transmit
$(X^n,W^n)$ over $n$ independent uses of $P$ and receive $Y^n$.
Defining
$P_{(i)}\colon\cX\times\cW\to\cY^n\times\cX^{i-1}\times\cW^{i-1}$ to be
the channel $U_iV_i\to Y^nU^{i-1}V^{i-1}$, we see that
\begin{align*}
P_{(1)} &\text{ is } P^{-\dots--}; \\
P_{(2)} &\text{ is } P^{-\dots-+}; \\
P_{(3)} &\text{ is } P^{-\dots+-}; \\
&\vdots \\
P_{(n)} &\text{ is } P^{+\dots++}.
\end{align*}
It then follows from Theorem~\ref{thm:polar} that when $n$ is large, almost all channels $P_{(i)}$
are close to one of the five limiting channels. Also note that 
the $i$th channel assumes 
a genie that provides knowledge of the previous symbols 
$(U^{i-1}V^{i-1})$ at the receiver. 
This observation and Remark~\ref{rmk:limiting} motivate the following
coding scheme: Fix $\epsilon>0$, $\delta>0$. Let ${\cal
A}_U\subset(U_1,\dotsc,U_n)$ and ${\cal
A}_V\subset(V_1,\dotsc,V_n)$ denote
the sets of information symbols to be transmitted. Choose these sets
as follows:
\begin{itemize}[\labelwidth=2.5em]
\item[(i)] If $\|\cJ(P_{(i)})-(0,0,0)\|<\epsilon$ then $U_i\notin{\cal A}_U,V_i\notin{\cal A}_V$,
\item[(s.i)] if $\|\cJ(P_{(i)})-(0,1,1)\|<\epsilon$ then $U_i\notin{\cal A}_U,V_i\in{\cal A}_V$,
\item[(s.ii)] if $\|\cJ(P_{(i)})-(1,0,1)\|<\epsilon$ then $U_i\in{\cal A}_U,V_i\notin{\cal A}_V$,
\item[(s.iii)] if $\|\cJ(P_{(i)})-(1,1,1)\|<\epsilon$ then either $U_i\in{\cal A}_U,V_i\notin{\cal A}_V$, or $U_i\notin{\cal A}_U,V_i\in{\cal A}_V$,
\item[(s.iv)] if $\|\cJ(P_{(i)})-(1,1,2)\|<\epsilon$ then $U_i\in{\cal A}_U,V_i\in{\cal A}_V$,
\item[(s.v)] otherwise, $U_i\notin{\cal A}_U,V_i\notin{\cal A}_V$.
\end{itemize}
Choose the symbols in ${\cal A}_U^c$ and ${\cal A}_V^c$ independently and uniformly at random, and reveal their
values to the receiver. This choice of ${\cal A}_U$ and ${\cal A}_V$ ensures that all the information
symbols see `reliable' channels, provided that the previous symbols are decoded
correctly. Consequently, upon receiving $Y^n$, the receiver may attempt to decode the symbols
successively, in
the order $(U_1V_1)$, $(U_2V_2)\dotsc$,
and hope for a low block error probability. Furthermore, Theorem~\ref{thm:polar} and the 
preservation of $I^{\paren{12}}(P)$ throughout the recursive channel
splitting/combining process guarantee that for any choice of
$\epsilon$ and $\delta$ there exists $n_0$ such that $|{\cal
A}_U|+|{\cal A}_V|>n[I^{\paren{12}}(P)-\delta]$ whenever $n\ge n_0$.
This observation hints at the achievability of points on the dominant face of
$\cI(P)$. For a proof of achievability, it only remains to show that the block error probability
of the discussed scheme vanishes with increasing block length. We do
this next.

Let $\phi_i\colon\cY^n\times\fF_q^{i-1}\times\fF_q^{i-1}$,
$i=1,\dotsc,n$ denote the
ML decision rule for estimating $(U_iV_i)$ given $(Y^n,(UV)^{i-1})$.
Note that this corresponds to a \emph{genie aided} decision rule---the genie
provides $(UV)^{i-1}$---for
estimating $(U_iV_i)$ from the output $Y^n$. Let $E_i$ denote the event
$\phi_i(Y^n,(UV)^{i-1})\neq(U_iV_i)$. Observe that $E_i$
is precisely the error event of 
$P_{(i)}$. Now define a standalone
decoder, recursively through
$$
T_i=\phi_i(Y^n,T^{i-1}),\qquad i=1,\dotsc,n,
$$
and let $E_i'$ denote the event $T_i\neq(U_iV_i)$. Note that
$\cup_iE_i'$ is the block error event for the scheme discussed above,
and that
$$
\cup_i E_i = \cup_i E'_i.
$$
Hence, the block error probability can be bounded as
\begin{align}
\label{eqn:bit-error}
\Pr[\text{block error}]= \Pr [\cup_i E_i']
 = \Pr[\cup_i E_i]\le\sum_i\Pr[E_i]=\sum_iP_e(P_{(i)}).
\end{align}
Note that the transmission scheme
described above implies that the only non-zero error terms on the
right-hand-side of 
\eqref{eqn:bit-error} are those corresponding to the symbols in 
$\mathcal{A}_U$ and $\mathcal{A}_V$. We will show that almost all of
these terms are sufficiently small, i.e., that by removing a negligible fraction of information bits from
$\mathcal{A}_U$ and $\mathcal{A}_V$, the above sum can be made
to vanish. 
\begin{theorem}
\label{thm:rate}
For any $\beta<1/2$, the block error probability of the polar coding
scheme described above, under successive cancellation decoding, is 
$o(2^{-n^\beta})$.
\end{theorem}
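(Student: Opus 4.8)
The plan is to bound the block error by a sum of per-symbol maximum-likelihood error probabilities, each controlled by a Bhattacharyya parameter of one of the $q$-ary linear-combination channels $P_{(i)}[\alpha,\gamma]$ defined above, and then to run the single-user ($q$-ary) rate-of-polarization estimate of Ar{\i}kan--Telatar on each of the finitely many such channels. First I would fix the decoder: each information symbol at position $i$ is recovered by ML from $(Y^n,U^{i-1},V^{i-1})$ --- together with the already-known frozen symbol, when there is one --- through whichever channel $P_{(i)}[\alpha,\gamma]$ has mutual information close to $1$. In the states $(1,0,1)$, $(0,1,1)$, $(1,1,2)$ this is $P_{(i)}[1,0]=P_{(i)}[U]$ and/or $P_{(i)}[0,1]=P_{(i)}[V]$, whose informations are $I^{\paren{12}}(P_{(i)})-I^{\paren{2}}(P_{(i)})$ and $I^{\paren{12}}(P_{(i)})-I^{\paren{1}}(P_{(i)})$ and hence tend to $1$ (using $I^{\paren{12}}=I^{\paren{1}}+I^{\paren{2}}$ at $(1,1,2)$); in the contention state $(1,1,1)$ one freezes, say, $V_i$, recovers $\alpha U_i+V_i$ through a \emph{diagonal} channel $P_{(i)}[\alpha,1]$ with $\alpha\neq0$, and reads off $U_i$ from the known $V_i$. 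By the genie argument already used for \eqref{eqn:bit-error} the standalone successive-cancellation decoder fails only if one of these decisions fails, so the block error probability is at most $q$ times the sum, over all decoded symbols, of the $Z(P_{(i)}[\alpha,\gamma])$ of the channel used for that symbol; it then suffices to show that, after deleting an $o(n)$-size set of decoded symbols, every remaining $Z$ is at most $2^{-n^{\beta'}}$ for a fixed $\beta<\beta'<1/2$, since then the block error is $o(2^{-n^\beta})$ and, by Theorem~\ref{thm:polar} and the invariance of $I^{\paren{12}}$, the deletions cost a vanishing rate.

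The heart of the matter is a rate-of-polarization statement for each channel $P_\ell[\alpha,\gamma]$. The key point is that forming $P\mapsto P[\alpha,\gamma]$ commutes with the Ar{\i}kan transform: summing the transition law over the cosets of $\ker(\alpha,\gamma)$ in $\fF_q^2$ and using additivity collapses the combining step, giving $\paren{P^-}[\alpha,\gamma]=\paren{P[\alpha,\gamma]}^-$ exactly, while $\paren{P^+}[\alpha,\gamma]$ is less noisy than $\paren{P[\alpha,\gamma]}^+$ (the latter is a degraded version of the former, obtained by discarding part of the genie output). Hence, for each $(\alpha,\gamma)$, the process $\ell\mapsto Z(P_\ell[\alpha,\gamma])$ obeys the usual Bhattacharyya recursions --- $Z_{\ell+1}\le q\,Z_\ell$ on a $-$ step and $Z_{\ell+1}\le Z_\ell^2$ on a $+$ step --- while $\ell\mapsto I(P_\ell[\alpha,\gamma])$ is a bounded submartingale (its $-$ increment equals the single-user one and its $+$ increment dominates it), whose limit lies in $\{0,1\}$ by the single-user ``$I(R^+)-I(R)$ small $\Rightarrow I(R)\notin(\epsilon,1-\epsilon)$'' lemma; so $Z(P_\ell[\alpha,\gamma])$ converges a.s.\ to a $\{0,1\}$-valued limit, tending to $0$ exactly when $I(P_\ell[\alpha,\gamma])\to1$. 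Feeding the two recursions into the Ar{\i}kan--Telatar counting bound (which uses only them) gives, for every $\beta'<1/2$ and every fixed $\epsilon'<1$, that $\card\{i:\epsilon'\ge Z(P_{(i)}[\alpha,\gamma])>2^{-n^{\beta'}}\}=o(n)$.

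Putting this together: whenever the decoder uses $P_{(i)}[\alpha,\gamma]$ at position $i$ it is because $I(P_{(i)}[\alpha,\gamma])$ is within $\epsilon$ of $1$, hence $Z(P_{(i)}[\alpha,\gamma])<\epsilon'$ for an $\epsilon'=\epsilon'(\epsilon)\to0$; applying the previous bound to each of the $q+1$ inequivalent values of $(\alpha,\gamma)$ (two axes and $q-1$ diagonals), the total number of decoded symbols whose channel has $Z>2^{-n^{\beta'}}$ is at most $(q+1)\cdot o(n)=o(n)$. Deleting these symbols, the remaining at most $n$ terms of $q\sum Z(\cdot)$ are each at most $q\,2^{-n^{\beta'}}$, so their sum is $o(2^{-n^\beta})$, which is the assertion; note the finitely-many-directions union also removes any need to track ``which'' diagonal a given position polarizes to.

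I expect the genuinely non-routine point to be the contention state $(1,1,1)$. There the ``obvious'' derived channels $P_{(i)}[U]$ and $P_{(i)}[V]$ have information tending to $0$ and are useless, so one must decode through a true linear combination $P_{(i)}[\alpha,1]$, $\alpha\neq0$. The side-information channels $P_{(i)}[U\mid V]$, $P_{(i)}[V\mid U]$ look natural here but do \emph{not} polarize cleanly --- the MAC $-$ operation strictly degrades them relative to the single-user $-$ --- so routing through the $P[\alpha,\gamma]$'s is essential, and verifying that $P\mapsto P[\alpha,\gamma]$ commutes with the $-$ transform when both $\alpha,\gamma$ are nonzero needs the coset-sum computation rather than the naive one (the naive transition expression still refers to $(U,V)$, not just to $\alpha U+\gamma V$, and only after summing over a coset of $\ker(\alpha,\gamma)$ does it close up). Once that identity and the ``$+$''-side degradation are established, the rest is the single-user polar-coding argument run $q+1$ times in parallel.
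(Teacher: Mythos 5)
Your architecture matches the paper's almost exactly: the genie/union bound reducing block error to $\sum_i P_e(P_{(i)})$, the reduction of each per-symbol error to a Bhattacharyya parameter of a linear-combination channel $P_{(i)}[\alpha,\gamma]$, the two structural facts that $(P^-)[\alpha,\gamma]$ is equivalent to $(P[\alpha,\gamma])^b$ while $(P[\alpha,\gamma])^g$ is degraded with respect to $(P^+)[\alpha,\gamma]$, the resulting submartingale/Bhattacharyya recursions, and the Ar{\i}kan--Telatar counting bound applied to each of the finitely many directions $(\alpha,\gamma)$. This is precisely the paper's Proposition~\ref{prop:linear} and its proof in Appendix~B, and your handling of the states $(0,1,1)$, $(1,0,1)$, $(1,1,2)$ via the axis channels $P[U]$, $P[V]$ (whose informations are $I^{\paren{12}}-I^{\paren{2}}$ and $I^{\paren{12}}-I^{\paren{1}}$) is correct and matches items (r.1), (r.2), (r.4) of Lemma~\ref{lem:rates}.

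There is, however, a genuine gap at the contention state $(1,1,1)$, and it is not the one you flagged. You correctly observe that one must decode through a diagonal channel $P_{(i)}[\alpha,1]$ with $\alpha\neq0$, and you then assert that ``whenever the decoder uses $P_{(i)}[\alpha,\gamma]$ at position $i$ it is because $I(P_{(i)}[\alpha,\gamma])$ is within $\epsilon$ of $1$.'' For the contention state this is circular: nothing you have proved guarantees that \emph{some} nonzero $(\alpha,\gamma)$ with $I(P_{(i)}[\alpha,\gamma])$ near $1$ exists. Your Proposition-\ref{prop:linear}-style argument shows only that each individual $I(P_\ell[\alpha,\gamma])$ converges to $\{0,1\}$; it does not exclude the possibility that all $q-1$ diagonals converge to $0$ while $\cJ(P_\ell)\to(1,1,1)$. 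For $q=2$ the existence of a good diagonal is immediate (every permutation of $\fF_2$ is affine, so $U+V$ is determined given $Y$ near the contention extremal), but for general prime $q$ the conditional law $p_{UV\mid Y=y}$ is only close to some \emph{arbitrary} permutation distribution $p_\pi$, which need not make any fixed linear combination deterministic. Closing this requires the paper's Lemma~\ref{lem:ugly} (Appendix~C): for every permutation $\pi$ of $\fF_q$ there is a nonzero pair $(\alpha_\pi,\gamma_\pi)$ with $H(\alpha_\pi X+\gamma_\pi W)\le 1-c(q)$ under $p_\pi$; pigeonholing over the $q!$ permutations produces one direction whose conditional entropy given $Y$ is bounded away from $1$ by a constant, and only then does the already-established polarization of that direction (your hypothesis \eqref{eqn:l-condition}) force its conditional entropy down to near $0$, i.e.\ $I(P^\bs[\alpha',\gamma'])>1-o(\epsilon)$. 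By contrast, the coset-sum commutation identity you single out as the crux is a routine linear change of variables; the real work in the contention case is this existence statement, which your proposal assumes rather than proves.
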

Theorem~\ref{thm:rate} is an immediate corollary to
the following result. 

\begin{lemma}
\label{lem:rates}
For any $\epsilon>0$ and $\beta<1/2$,
\begin{itemize}[\labelwidth=2em]
\item[(r.1)]
$\lim_{\ell\to\infty}\frac1{2^\ell} \#\left\{ \bs\in\{-,+\}^\ell\colon
\|\cJ(P^\bs)-(0,1,1)\|<\epsilon, P_e(P^\bs[V])\ge
2^{-2^{\ell\beta}}\right\}=0$,
\item[(r.2)] $\lim_{\ell\to\infty}\frac1{2^\ell} \#\left\{ \bs\in\{-,+\}^\ell\colon
\|\cJ(P^\bs)-(1,0,1)\|<\epsilon, P_e(P^\bs[U])\ge
2^{-2^{\ell\beta}}\right\}=0$,
\item[(r.3)] $\lim_{\ell\to\infty}\frac1{2^\ell} \#\left\{ \bs\in\{-,+\}^\ell\colon
\|\cJ(P^\bs)-(1,1,1)\|<\epsilon, \max\{P_e(P^\bs[U\mid V]),P_e(P^\bs[V\mid
U])\}\ge
2^{-2^{\ell\beta}}\right\}=0$,
\item[(r.4)] $\lim_{\ell\to\infty}\frac1{2^\ell} \#\left\{ \bs\in\{-,+\}^\ell\colon
\|\cJ(P^\bs)-(1,1,2)\|<\epsilon, P_e(P^\bs[U])+P_e(P^\bs[V])\ge
2^{-2^{\ell\beta}}\right\}=0$.
\end{itemize}
\end{lemma}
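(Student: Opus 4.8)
\begin{newproof}{Proof sketch}
The plan is to reduce (r.1)--(r.4) to statements about the processes of Section~\ref{sec:polarization} and then invoke the rate-of-polarization argument of~\cite{STA2009}. Exactly as in the proof of Theorem~\ref{thm:polar}, each of the four counting fractions equals the probability of the corresponding event with $P_\ell$ in place of $P^\bs$; and since $P_e(Q)\le qZ(Q)$, it suffices to control the Bhattacharyya parameters of the point-to-point channels that occur. I will take $\epsilon$ small enough that the $\epsilon$-neighbourhoods of the five points of $M$ are disjoint (the regime used by the coding scheme); by this separation and Theorem~\ref{thm:limit}, for each $m\in M$ the event $\{\|\cJ(P_\ell)-m\|<\epsilon\ \text{and}\ \cJ(P_\ell)\not\to m\}$ has probability $o(1)$, so in each item one may argue on the event $\{\cJ(P_\ell)\to m\}$ and finish by dominated convergence.

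The structural fact I would establish first is that, for every $(\alpha,\gamma)\in\fF_q^2$, the channel $P[\alpha,\gamma]$ (the channel $\alpha U+\gamma V\to Y$) polarizes like an ordinary single-user channel, precisely because Ar{\i}kan's combining is \emph{linear}. A change of variables along the cosets $\{\alpha u+\gamma v=s\}$ shows $P^-[\alpha,\gamma]=(P[\alpha,\gamma])^-$ and shows that $(P[\alpha,\gamma])^+$ is $P^+[\alpha,\gamma]$ with the genie pair $(u_1,v_1)$ replaced by the single symbol $\alpha u_1+\gamma v_1$, so that $Z(P^+[\alpha,\gamma])\le Z((P[\alpha,\gamma])^+)$. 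Hence $Z^{(\alpha,\gamma)}_\ell:=Z(P_\ell[\alpha,\gamma])$ is a bounded supermartingale obeying the $q$-ary single-user recursions, $Z^{(\alpha,\gamma)}_{\ell+1}\le q\,(Z^{(\alpha,\gamma)}_\ell)^2$ on $\{B_{\ell+1}=+\}$ and $Z^{(\alpha,\gamma)}_{\ell+1}\le q\,Z^{(\alpha,\gamma)}_\ell$ always (see~\cite{Sasoglu2010,STA2009}). From these, $Z^{(\alpha,\gamma)}_\infty\in\{0,1\}$ a.s., and by the Ar{\i}kan--Telatar argument~\cite{STA2009} one has $Z^{(\alpha,\gamma)}_\ell=o(2^{-2^{\ell\beta}})$ a.s.\ on $\{Z^{(\alpha,\gamma)}_\infty=0\}$ for every $\beta<1/2$.

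Items (r.1), (r.2), (r.4) then follow at once. Since $P[V]=P[0,1]$ and $P[U]=P[1,0]$, and $I(P_\ell[0,1])=I^{\paren{12}}_\ell-I^{\paren{1}}_\ell$, $I(P_\ell[1,0])=I^{\paren{12}}_\ell-I^{\paren{2}}_\ell$, Theorem~\ref{thm:limit} forces $Z^{(0,1)}_\ell\to0$ on $\{\cJ(P_\ell)\to(0,1,1)\}$, $Z^{(1,0)}_\ell\to0$ on $\{\cJ(P_\ell)\to(1,0,1)\}$, and \emph{both} $\to0$ on $\{\cJ(P_\ell)\to(1,1,2)\}$; combined with $P_e(P^\bs[V])\le qZ^{(0,1)}(P^\bs)$, $P_e(P^\bs[U])\le qZ^{(1,0)}(P^\bs)$, the second paragraph, and (for (r.4)) the fact that a sum of two $o(2^{-2^{\ell\beta}})$ quantities is again $o(2^{-2^{\ell\beta}})$, this is exactly (r.1), (r.2), (r.4).

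The hard part is (r.3). Here $Z(P^\bs[U\mid V])$ is \emph{not} a single-user polar process: the ``$+$'' step squares it as usual, but on a ``$-$'' step $(P^-)[U\mid V]$ is a genuine degradation of $(P[U\mid V])^-$ --- the output retains only one linear combination of the two second-user genie symbols --- so its $Z$ can be much larger than $Z((P[U\mid V])^-)$ and the recursion breaks. The remedy is to bound it through the linear channels of the second paragraph: for any $(\alpha_0,\gamma_0)$ with $\alpha_0\gamma_0\ne0$, the decoder that first estimates $S:=\alpha_0U+\gamma_0V$ from the output of $P^\bs$ by the ML rule of $P^\bs[\alpha_0,\gamma_0]$, obtaining $\hat S$, and then --- using the known $V$ --- outputs $\alpha_0^{-1}(\hat S-\gamma_0V)$, has error $P_e(P^\bs[\alpha_0,\gamma_0])$; hence $\max\{P_e(P^\bs[U\mid V]),P_e(P^\bs[V\mid U])\}\le qZ^{(\alpha_0,\gamma_0)}(P^\bs)$. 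It then remains to exhibit, on $\{\cJ(P_\ell)\to(1,1,1)\}$, one such $(\alpha_0,\gamma_0)$ with $Z^{(\alpha_0,\gamma_0)}_\ell\to0$. On that event the two axis directions are useless in the limit, $I(P_\ell[1,0])=I^{\paren{12}}_\ell-I^{\paren{2}}_\ell\to0$ and $I(P_\ell[0,1])\to0$; and if \emph{every} one of the $q+1$ directions were useless in the limit --- i.e.\ if every linear functional of $(U_\ell,V_\ell)$ became asymptotically independent of $Y_\ell$ --- then, by a quantitative form of the fact that a probability law on $\fF_q^2$ whose $q+1$ one-dimensional marginals are all uniform is itself uniform, the pair $(U_\ell,V_\ell)$ would become asymptotically independent of $Y_\ell$, forcing $I^{\paren{12}}_\infty=0$ and contradicting $I^{\paren{12}}_\infty=1$ on $\{\cJ(P_\ell)\to(1,1,1)\}$. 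Since each $Z^{(\alpha,\gamma)}_\infty\in\{0,1\}$, some off-axis direction therefore polarizes to the perfect channel; taking the minimum of $Z^{(\alpha,\gamma)}_\ell$ over the finitely many $(\alpha,\gamma)$ with $\alpha\gamma\ne0$ then gives $o(2^{-2^{\ell\beta}})$ on $\{\cJ(P_\ell)\to(1,1,1)\}$, and feeding this into the displayed bound and applying dominated convergence yields (r.3). The crux, and the step I expect to be the main obstacle, is this rerouting: recognizing that $P^\bs[U\mid V]$ and $P^\bs[V\mid U]$ do not polarize by themselves, controlling them through the linear-combination channels $P^\bs[\alpha,\gamma]$ --- which do polarize, by linearity of Ar{\i}kan's construction --- and then certifying that on the pure-contention event one linear combination of the two users' inputs is recoverable almost perfectly in the limit.
\end{newproof}
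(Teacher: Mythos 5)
Your proposal is correct, and for items (r.1), (r.2), (r.4) and for the overall reduction of (r.3) it follows the same skeleton as the paper: everything is routed through the linear-combination channels $P^\bs[\alpha,\gamma]$, which inherit Ar{\i}kan's single-user recursion (the paper's Appendix B proves exactly your two structural facts, that $P^-[\alpha,\gamma]$ is equivalent to $(P[\alpha,\gamma])^-$ and that $Z(P^+[\alpha,\gamma])\le Z\bigl((P[\alpha,\gamma])^+\bigr)$, and then cites the Ar{\i}kan--Telatar rate result), and (r.3) is reduced via the same inequality $\max\{P_e(P[U\mid V]),P_e(P[V\mid U])\}\le P_e(P[\alpha_0,\gamma_0])\le qZ(P[\alpha_0,\gamma_0])$ for $\alpha_0\gamma_0\neq0$. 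Where you genuinely diverge is the crux of (r.3): certifying that on the event $\cJ(P_\ell)\to(1,1,1)$ some off-axis direction polarizes to a perfect channel. The paper does this with Lemma~\ref{lem:ugly} (Appendix C): from $I^{\paren{1}},I^{\paren{2}}\to1$ and $I(U;Y),I(V;Y)\to0$ it shows the conditional law $p_{UV\mid Y=y}$ concentrates near a permutation measure $p_\pi$, for which the explicit direction $\alpha_\pi=\pi(0)-\pi(1)$, $\gamma_\pi=1$ carries an atom of mass at least $2/q$, so $H(\alpha_\pi U+\gamma_\pi V\mid Y)$ is bounded away from $1$ and the dichotomy forces that direction to be perfect. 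You instead argue by contradiction through Fourier uniqueness: if all $q+1$ directions were useless, every one-dimensional marginal of $p_{UV\mid Y=y}$ would be uniform for typical $y$, hence the joint would be uniform and $I^{\paren{12}}_\infty=0$, contradicting $I^{\paren{12}}_\infty=1$. Your route is valid and in fact needs only $I^{\paren{12}}_\infty>0$ together with the polarization dichotomy and the uselessness of the two axes, so it is somewhat more economical in hypotheses and avoids the permutation-measure bookkeeping; the price is that the ``quantitative form'' of the marginals-determine-the-joint fact is asserted rather than proved, though it is a routine Pinsker--Plancherel computation playing exactly the role of Appendix C. One immaterial slip: $Z(P_\ell[\alpha,\gamma])$ need not be a supermartingale (the minus branch only gives $Z^-\le 2Z$), but the cited rate results require only the one-step bounds $Z^+\le qZ^2$ and $Z^-\le 2Z$ together with convergence of $I$, which you have.
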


The following proposition will be useful in the proof of
Lemma~\ref{lem:rates}.
\begin{proposition}
\label{prop:linear}
For all $\alpha,\gamma\in\fF_q$ and $\delta>0$,
\begin{align}
\lim_{\ell\to\infty}\frac1{2^\ell}\#\left\{\bs\in\{-,+\}^\ell\colon
I(P^\bs[\alpha,\gamma])\in(\delta,1-\delta)\right\}=0.
\end{align}
That is, the channels $\alpha U+\gamma V\to Y$ polarize to
become either perfect or useless. Moreover, convergence to perfect
channels is almost surely fast:
\begin{align}
\label{eqn:linear-rate}
\lim_{\ell\to\infty}\frac1{2^\ell}\#\left\{\bs\in\{-,+\}^\ell\colon
I(P^\bs[\alpha,\gamma])\ge 1-\delta,  Z(P^\bs[\alpha,\gamma])\ge
2^{-n^\beta}\right\}=0
\end{align}
for all $0<\beta,\delta<1/2$ and $\alpha,\gamma\in\fF_q$.
\end{proposition}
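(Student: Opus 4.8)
The plan is to show that $P^\bs[\alpha,\gamma]$ sits between the two ends of Ar{\i}kan's single-user polar process on a related $q$-ary channel, and then to run the single-user arguments. Fix $\alpha,\gamma\in\fF_q$; if $(\alpha,\gamma)=(0,0)$ the channel $P[\alpha,\gamma]$ conveys no information and both assertions are trivial, so assume $(\alpha,\gamma)\neq(0,0)$, and set $Q:=P[\alpha,\gamma]$. Writing $S_1=\alpha U_1+\gamma V_1$ and $S_2=\alpha U_2+\gamma V_2$, under the change of variables of Section~\ref{sec:polarization} one has $\alpha X_1+\gamma W_1=S_1+S_2$ and $\alpha X_2+\gamma W_2=S_2$, so the pair of inputs seen by the two combined copies of $Q$ is precisely $(S_1+S_2,S_2)$, which is Ar{\i}kan's single-user transform with virtual inputs $(S_1,S_2)$. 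A short calculation from the definitions of $P^{\pm}$ then gives two one-step facts, valid for any $q$-ary MAC $P$: (a) $P^-[\alpha,\gamma]=Q^-$ \emph{as channels}, where $Q^-,Q^+$ denote Ar{\i}kan's single-user $(-)$- and $(+)$-transforms of $Q$; and (b) $Q^+$ is a \emph{degraded} version of $P^+[\alpha,\gamma]$, the degradation being the deterministic map $(y_1,y_2,u_1,v_1)\mapsto(y_1,y_2,\alpha u_1+\gamma v_1)$ on the output. Hence $I(P^-[\alpha,\gamma])=I(Q^-)$, $I(P^+[\alpha,\gamma])\ge I(Q^+)$, $Z(P^-[\alpha,\gamma])=Z(Q^-)$ and $Z(P^+[\alpha,\gamma])\le Z(Q^+)$, and the same relations hold with $P$ replaced by any $P^\bs$.

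For the first assertion I would follow the process $J_\ell:=I(P_\ell[\alpha,\gamma])$ along the MAC process $\{P_\ell\}$ of~\eqref{eqn:process-P}, and set $Q_\ell:=P_\ell[\alpha,\gamma]$. The single-user conservation law $I(Q_\ell^-)+I(Q_\ell^+)=2I(Q_\ell)$ together with~(b) yields $I(P_\ell^-[\alpha,\gamma])+I(P_\ell^+[\alpha,\gamma])\ge 2J_\ell$, so $\{J_\ell\}$ is a bounded submartingale and converges almost surely; consequently $E\bigl|J_{\ell+1}-J_\ell\bigr|\to0$. Taking the conditional expectation over $B_{\ell+1}$ and using conservation, fact~(b), and the monotonicity of the single-user transforms, one gets $E\bigl|J_{\ell+1}-J_\ell\bigr|\ge E\bigl[I(Q_\ell^+)-I(Q_\ell)\bigr]$, so $I(Q_\ell^+)-I(Q_\ell)\to0$ in probability. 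Now apply Lemma~\ref{lem:basic} pointwise to the $q$-ary channel $Q=Q_\ell$: its hypothesis~(iii) reads $I(Q_\ell^+)-I(Q_\ell)<\delta$ and its conclusion is $I(Q_\ell)\notin(\epsilon,1-\epsilon)$; hence, with $\delta=\delta(\epsilon)$ of that lemma, $\{I(Q_\ell)\in(\epsilon,1-\epsilon)\}\subseteq\{I(Q_\ell^+)-I(Q_\ell)\ge\delta(\epsilon)\}$, whose probability tends to $0$. Since $\Pr[J_\ell\in(\epsilon,1-\epsilon)]$ equals $\tfrac1{2^\ell}\#\{\bs\colon I(P^\bs[\alpha,\gamma])\in(\epsilon,1-\epsilon)\}$, this gives the first assertion, and combined with the almost sure convergence of $J_\ell$ it shows $J_\infty\in\{0,1\}$ almost surely.

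For the rate assertion~\eqref{eqn:linear-rate} I would instead follow $Z_\ell:=Z(P_\ell[\alpha,\gamma])$. The relations of the first paragraph together with the standard single-user Bhattacharyya recursions $Z(Q^-)\le c\,Z(Q)$ and $Z(Q^+)\le c'\,Z(Q)^2$ (with $c,c'$ depending only on $q$; see~\cite{STA2009}) give $Z_{\ell+1}\le c\,Z_\ell$ when $B_{\ell+1}=\mathord{-}$ and $Z_{\ell+1}\le c'\,Z_\ell^2$ when $B_{\ell+1}=\mathord{+}$. Using $J_\infty\in\{0,1\}$ and the standard fact that for a $q$-ary channel $I\to1$ forces $Z\to0$ and $I\to0$ forces $Z\to1$, the process $Z_\ell$ converges almost surely to a $\{0,1\}$-valued limit $Z_\infty$, with $\{Z_\infty=0\}=\{J_\infty=1\}$. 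The Ar{\i}kan--Telatar rate-of-polarization argument (as in~\cite{Arikan2009},~\cite{STA2009}) uses only these two recursions and the almost sure convergence to $\{0,1\}$, and yields $\Pr\bigl[Z_\ell\ge 2^{-2^{\ell\beta}},\,Z_\infty=0\bigr]\to0$ for every $\beta<1/2$. To conclude, I would bound the normalized count in~\eqref{eqn:linear-rate}, i.e.\ $\Pr\bigl[J_\ell\ge1-\delta,\,Z_\ell\ge2^{-2^{\ell\beta}}\bigr]$, by $\Pr\bigl[Z_\ell\ge2^{-2^{\ell\beta}},\,Z_\infty=0\bigr]+\Pr\bigl[J_\ell\ge1-\delta,\,Z_\infty=1\bigr]$: the first term vanishes by the above, and the second vanishes because on $\{Z_\infty=1\}=\{J_\infty=0\}$ one has $J_\ell\to0$, so $\ind\{J_\ell\ge1-\delta\}\,\ind\{Z_\infty=1\}\to0$ almost surely and bounded convergence applies.

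The main obstacle is the reason the argument must be arranged this way: $P^\bs[\alpha,\gamma]$ is \emph{not} literally Ar{\i}kan's single-user process — it agrees with the single-user channel after a $(-)$-step but is strictly upgraded after a $(+)$-step — so single-user polarization cannot simply be quoted. This forces one to obtain the polarization of $I$ from the submartingale property plus Lemma~\ref{lem:basic} rather than from a conservation law, and the rate from the direct recursion for $Z_\ell$ rather than from a sandwich inequality for $I$. The one-step channel identity~(a) and the degradation relation~(b) are the technical heart; the remaining steps are routine.
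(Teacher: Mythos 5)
Your proposal is correct and follows essentially the same route as the paper: your one-step facts (a) and (b) are exactly the paper's claims (ii) and (i) in Appendix B, the polarization of $I(P^\bs[\alpha,\gamma])$ is obtained from the resulting submartingale property together with Lemma~\ref{lem:basic} (mirroring Corollary~\ref{cor:i1}, as the paper itself does), and the rate follows from the Bhattacharyya recursions combined with the Ar{\i}kan--Telatar argument. The only step you assert rather than prove---that the output map $(y_1,y_2,u_1,v_1)\mapsto(y_1,y_2,\alpha u_1+\gamma v_1)$ cannot decrease $Z$, i.e.\ $Z(P^+[\alpha,\gamma])\le Z(P[\alpha,\gamma]^g)$---is precisely what the paper establishes via the concavity of the Bhattacharyya parameter in the channel (Lemma~\ref{lem:z-convex}), so nothing essential is missing.
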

\begin{proof}
See Appendix B.
\end{proof}

\begin{newproof}{Proof of Lemma~\ref{lem:rates}}
The proof of (r.1) follows immediately from Proposition~\ref{prop:linear}
by taking $\alpha=0$ and $\gamma=1$, and by the relation
$P_e(P^\bs[V])\le qZ(P^\bs[V])$. Proofs of (r.2) and (r.4) follow
similarly.

To prove (r.3), we first observe that for any MAC $P$, 
\begin{align}
\label{eqn:error-linear}
\max\{P_e(P[U\mid V]),P_e(P[V\mid U])\}\le P_e(P[\alpha,\gamma])\le
qZ(P[\alpha,\gamma])
\end{align}
for all
$\alpha,\gamma\in\fF_q$. 
We also know from Proposition~\ref{prop:linear} that 
when $\ell$ is sufficiently
large, then for all $\alpha,\gamma\in\fF_q$,
\begin{align}
\label{eqn:l-condition}
I(P^\bs[\alpha,\gamma])\notin(o(\epsilon),1-o(\epsilon))
\end{align}
for almost all $\bs\in\{-,+\}^\ell$. It is an immediate consequence of
Lemma~\ref{lem:ugly} in Appendix C that
whenever \eqref{eqn:l-condition} is satisfied, then 
$$
\|\cJ(P^\bs)-(1,1,1)\|<\epsilon
\text{ implies }
I(P[\alpha,\gamma])>1-o(\epsilon)
$$
for some $\alpha,\gamma$. 
Claim (r.3) will then follow from
\eqref{eqn:linear-rate} and \eqref{eqn:error-linear}.
\end{newproof}

\section{Discussion}

The technique described above adapts the single-user polarization technique
of Ar{\i}kan to the two-user multiple-access channels.  It can be seen that
it retains the quality of being low complexity, and has similar error
probability scaling as the single-user case.  

As in the original polar code construction for single user channels,
the discussion for MAC polar codes above consider uniform input
distributions. How to achieve true channel capacity
with polar codes, using Gallager's method \cite[p.\ 208]{Gallager},
is discussed in \cite{STA2009}. 
The arguments in \cite[Section III.D]{STA2009} can easily be adapted
to multiple-access channels to extend the above results to rate regions
with non-uniform inputs.

A number of questions
for further study come to mind:
Unlike the single-user setting, where the `symmetric capacity' of a
channel is a single number, the `dominant face' of the set of rates that
correspond to uniformly distributed inputs is a line segment.  The
polarization technique outlined here does not in general achieve
the whole segment, but only a subset of it, for the simple reason that
the equations (1) and (2) are inequalities rather than equalities.  Is
there an alternative way to do MAC polarizaton and not suffer this loss?

A natural extension of the results presented is to the case of
multiple-access channels with more than
two users.  For such channels, one can fairly easily show that with a
similar construction as in this paper, there are a finite
number of limiting MACs, and that these extremal MACs have the property
that their rate regions are described by polymatroidal equations
with \emph{integer} right-hand sides, and are thus matroids.  One
encounters, however, a new phenomenon: not all matroids are
possible regions of a MAC.  The treatment of these require 
further techniques, which is the subject of \cite{AbbeTelatar2010}.

\section*{Appendix A}
In this section, we discuss how polar codes 
can be used to achieve arbitrary points in the capacity
region of any MAC with arbitrary number of users and discrete input
alphabets.
We follow the notation used in 
Section~\ref{sec:preliminaries}. 
For sake of simplicity, we show the achievability of corner 
points of $\cI(P)$ for a given $q$-ary input two-user MAC $P$, and
discuss how the result can be generalized.
\begin{theorem}
\label{thm:corner}
Let $P$ be a two-user $q$-ary input MAC. For any $\epsilon>0$ and $\beta<1/2$, there
exist two polar
codes $\mathcal{C}_1$ and $\mathcal{C}_2$ with sufficiently large
block lengths $n$, and with rates 
\begin{align*}
R_1&>I(X;Y)-\epsilon\\
R_2&>I(W;YX)-\epsilon
\end{align*}
such that if used by the two senders for transmission over $P$, their
average block error
probability does not exceed $2^{-n^\beta}$. This performance
is guaranteed under a receiver that decodes the \emph{messages
} successively.
\end{theorem}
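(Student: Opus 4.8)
The plan is to reduce the two-user corner-point problem to two applications of the single-user ($q$-ary) polarization theorem, one per sender, glued together by a successive (``onion-peeling'') decoder. Sender~1 uses an ordinary polar code designed for the point-to-point channel $P[U]\colon X\to Y$ obtained by averaging $P$ over a uniform $W$; sender~2 uses an ordinary polar code designed for $P[V\mid U]\colon W\to YX$. The receiver first decodes sender~1's message from $Y^n$ alone, reconstructs $X^n$, and then decodes sender~2's message from $(Y^n,X^n)$. The arithmetic that turns this into a corner of $\cI(P)$ is
$$
I(P[U])+I(P[V\mid U])=I(X;Y)+I(W;YX)=I(XW;Y)=I^{\paren{12}}(P),
$$
the middle equality holding because $X$ and $W$ are independent.

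First I would pin down the channel seen by sender~1. Each sender builds its transmitted vector by Ar{\i}kan's rule: the information coordinates carry uniform message symbols and the frozen coordinates carry symbols drawn uniformly at random and revealed to the receiver; since $B_nG_n$ is invertible over $\fF_q$, the transmitted vectors $X^n$ and $W^n$ are each i.i.d.\ uniform on $\fF_q^n$, and they are independent. Hence the map $X^n\to Y^n$ is literally $n$ memoryless uses of $P[U]$. By the single-user polarization theorem for prime $q$ (\cite{STA2009,Sasoglu2010}), for every $\beta<1/2$ and all large $n$ there is a polar code of rate $R_1>I(P[U])-\epsilon=I(X;Y)-\epsilon$ whose block error probability under successive-cancellation decoding is $o(2^{-n^\beta})$. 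Decoding it produces $\hat U^n$, hence $\hat X^n=\hat U^nB_nG_n$.

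Next, condition on the event that the first stage decodes correctly. Then the receiver holds the true $X^n$, and because $X^n$ is i.i.d.\ uniform and independent of $W^n$ while $Y_i$ depends only on $(X_i,W_i)$, the pair $(Y^n,X^n)$ is exactly $n$ memoryless uses of $P[V\mid U]$ with input $W^n$. A second application of the single-user theorem supplies a polar code of rate $R_2>I(P[V\mid U])-\epsilon=I(W;YX)-\epsilon$ whose conditional block error probability is $o(2^{-n^\beta})$. Assembling the two stages by a union bound,
$$
\Pr[\text{block error}]\le\Pr[\hat U^n\neq U^n]+\Pr[\hat V^n\neq V^n\mid\hat U^n=U^n]=o(2^{-n^\beta}),
$$
which is $\le 2^{-n^\beta}$ once $n$ is large enough (or, for a clean inequality, invoke the single-user bound with some $\beta'\in(\beta,1/2)$). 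The receiver described here decodes the two messages one after the other, as required.

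The only genuinely load-bearing observations, hence the places I would be most careful, are the two ``the residual channel is memoryless'' claims: both reduce to the fact that polar encoding with uniform frozen symbols delivers i.i.d.\ uniform channel inputs, and that conditioning on a correct first stage leaves sender~2 facing a clean single-user channel; everything else is bookkeeping and a union bound. The extension flagged after the statement is then routine: for $m$ users and any vertex of $\cI(P)$, order the senders according to that vertex and let sender $j$ use a polar code for the channel from its own input to $Y$ together with the inputs of senders $1,\dots,j-1$, with the receiver peeling the messages off in that order; arbitrary (non-vertex) points follow by time-sharing, or by the rate-splitting construction of \cite{RimUrb1996,GRUW2001}.
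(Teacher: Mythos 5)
Your proposal is correct and is essentially the paper's own proof: the same onion-peeling decomposition into the single-user channels $X\to Y$ and $W\to YX$, the same observation that uniform frozen symbols make $X^n$ and $W^n$ i.i.d.\ uniform and independent so that each stage is $n$ memoryless uses of a point-to-point channel, and the same union bound over the two stages. The only cosmetic difference is that the paper bounds the second-stage term by the \emph{unconditional} error probability of the genie-aided decoder, $\Pr[\phi_W(Y^nX^n)\neq W^n]$, rather than by a probability conditioned on first-stage success --- the conditioning event is correlated with $(W^n,Y^n)$, so writing $\Pr[\hat V^n\neq V^n,\ \hat X^n=X^n]\le\Pr[\phi_W(Y^nX^n)\neq W^n]$ is the cleaner way to invoke the single-user, averaged-over-frozen-symbols error bound.
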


\begin{proof}
Given a single-user $q$-ary input channel $Q$, let
$P_{e,n}(Q,\mathcal{A},u_{\mathcal{A}^c})$ denote the block error
probability of a polar code under successive cancellation (SC) decoding, with information set $\mathcal{A}$ and
frozen symbols fixed to $u_{\mathcal{A}^c}$, averaged over all 
messages. We know from \cite{Arikan2009} and \cite{STA2009}
that when $n$ is sufficiently large, there exists a set $\mathcal{A}$
with $|\mathcal{A}|>n(I(Q)-\epsilon)$ and
\begin{align}
\label{eqn:avgerr}
\frac1{2^{|\mathcal{A}^c|}}\sum_{u_{\mathcal{A}^c}}P_{e,n}(Q,\mathcal{A},u_{\mathcal{A}^c})=\mathcal{O}(2^{-n^\beta}).
\end{align}

Define two $q$-ary input channels $Q_1\colon\mathbb{F}_2\to\cY$
and $Q_2\colon\mathbb{F}_2\to\cY$ through the transition probabilities
\begin{align*}
Q_1(y\mid x)&=\tfrac1q\sum_w P(y\mid x,w),\\
Q_2(y,x\mid w)&=\tfrac1q P(y\mid x,w).
\end{align*}
Clearly, we have $I(Q_1)=I(X;Y)$ and $I(Q_2)=I(W;YX)$. Take $n$ sufficiently large and find sets $\mathcal{A}_1$ and
$\mathcal{A}_2$ with $|\mathcal{A}_1|>n(I(Q_1)-\epsilon)$ and
$|\mathcal{A}_2|>n(I(Q_2)-\epsilon)$, such that \eqref{eqn:avgerr}
holds when $(Q,\mathcal{A})$ is replaced with $(Q_1,\mathcal{A}_1)$ and
$(Q_2,\mathcal{A}_2)$, respectively. We will show
that the ensemble of polar code pairs characterized by $\mathcal{A}_1$
and $\mathcal{A}_2$ have small average error probability when used for
transmission over 
$P$. 

Consider a receiver that first makes an SC estimate
$\hat{X}^n=\phi_{X}(Y^n)$ on
the first sender's codeword $X^n$ based on the output $Y^n$, and
then produces
$\hat{W}^n=\phi_{W}(Y^n\hat{X}^n)$, 
where $\phi_{W}$ denotes the SC estimate of $W^n$ conditioned
on $(Y^nX^n)$. That is, the decoder for $W^n$ assumes that the decision on $X^n$ is always correct. 
The average block error probability of 
this scheme can be bounded using the relations
\begin{align*}
\Pr[\text{block error}]&=\Pr[\hat{X}^n\neq X^n \text{ or } \hat{W}^n\neq
W^n]\\
&=\Pr[\phi_{X}(Y)\neq X^n]+ \Pr[\phi_{W}(Y^n\hat{X}^n)\neq
W^n, \hat{X}^n=X^n]\\
&=\Pr[\phi_{X}(Y)\neq X^n]+ \Pr[\phi_{W}(Y^nX^n)\neq
W^n, \hat{X}^n=X^n]\\
&\le\Pr[\phi_{X}(Y)\neq X^n]+ \Pr[\phi_{W}(Y^nX^n)\neq W^n].
\end{align*}
The first probability term above can be written as
\begin{align*}
\Pr[\phi_X(Y^n)\neq X^n]
&=\frac1{q^n}\sum_{w^n}\Pr[\phi_X(Y^n)\neq X^n\mid W^n=w^n] \\
&=\frac1{q^{|\mathcal{A}_1^c|}}\sum_{u_{\mathcal{A}_1^c}}P_{e,n}(Q_1,\mathcal{A}_1,u_{\mathcal{A}_1^c})\\
&= \mathcal{O}(2^{-n^\beta}).
\end{align*}
Here, we obtained the second equality by observing that the codeword symbols
$X^n$ and $W^n$ are independent and uniformly distributed, which follows from the
uniform distribution on the frozen and information symbols. The
third inequality follows from \eqref{eqn:avgerr}. By the same line of
argument one can write
\begin{align*}
\Pr[\phi_W(Y^nX^n)\neq W^n]
&=\sum_{x_1^n} \Pr[\phi_W(Y^nX^n)\neq W_1^n,
X_1^n=x_1^n]\\
&=\frac1{q^{|\mathcal{A}_2^c|}}\sum_{u_{\mathcal{A}_2^c}}P_{e,n}(Q_2,\mathcal{A}_2,u_{\mathcal{A}_2^c})\\
&= \mathcal{O}(2^{-n^\beta}).
\end{align*}
Therefore, the block error probability, averaged over the ensemble of
polar code pairs is $\mathcal{O}(2^{-n^\beta})$. This lets us
conclude that there exists at least one pair of polar codes with
the 
promised rates and average block error probability.
\end{proof}

In \cite{RimUrb1996} and \cite{GRUW2001}, it was shown that any point
in the capacity region of an
$M$-user MAC can be expressed as a corner point of (at most) a $(2M-1)$-user MAC
rate region, possibly with non-uniform inputs. In addition, it
is shown in \cite[Section III]{STA2009} how
polar codes for non-binary channels can be used to achieve 
capacity of arbitrary discrete channels, by
inducing arbitrary 
non-uniform distributions on the input.
Modifying the above proof along these observations,
one can easily generalize Theorem~\ref{thm:corner} in order to show that polar codes achieve 
all points in the capacity region of any discrete input
MAC with arbitrary number of users.

\section*{Appendix B: Proof of Proposition~\ref{prop:linear}}
Given a channel $Q\colon\fF_q\to\cY$, define two channels $Q^b\colon\fF_q\to\cY^2$ and $Q^g\colon\fF_q\to\cY^2\times\fF_q$ through
\begin{align*}
Q^b(y_1,y_2\mid x_1) = \sum_{x_2} \tfrac12 Q(y_1\mid x_1+x_2) Q(y_2\mid x_2), \\
Q^g(y_1,y_2,x_1\mid x_2) = \tfrac12 Q(y_1\mid x_1+x_2) Q(y_2\mid x_2).
\end{align*}
It is easy to see that $I(Q^b)+I(Q^g)=2I(Q)$. We will show that for all $\alpha,\gamma\in\fF_q$,
\begin{itemize}
\item[(i)]
$P[\alpha,\gamma]^g$ is degraded with respect to $P^+[\alpha,\gamma]$,
\item[(ii)]
$P[\alpha,\gamma]^b$ is equivalent to $P^+[\alpha,\gamma]$,
\end{itemize}
implying 
$$
I(P^+[\alpha,\gamma])+I(P^-[\alpha,\gamma])\ge 2I(P[\alpha,\gamma]).
$$
This, in addition to (i), (ii), and Lemma~\ref{lem:basic}, implies the
convergence of the channels $P[\alpha,\gamma]$ to extremals---the
proof is identical to that of Corollary~\ref{cor:i1}. That is,
$$
\lim_{\ell\to\infty}\frac1{2^\ell}\#\left\{\bs\in\{-,+\}^\ell\colon I(P^\bs[\alpha,\gamma])\in(\delta,1-\delta)\right\}=0.
$$
To prove the claim on the rate of convergence, we will show that 
\begin{align}
\label{eqn:z-alpha}
Z(P^-[\alpha,\gamma])\le 2Z(P[\alpha,\gamma])\quad\text{ and }\quad
Z(P^+[\alpha,\gamma])\le qZ(P[\alpha,\gamma])^2. 
\end{align}
The proof will then follow from previous results, namely
\begin{lemma}[\cite{ArikanTelatar2009},\cite{STA2009}]
\label{lem:rate}
For any $q$-ary input channel $Q\colon\mathbb{F}_2\to\cY$,
channels $Q^b$ and $Q^g$ satisfy
\begin{align}
\label{eqn:z-evolution}
Z(Q^b)\leq 2Z(Q)\quad\text{ and }\quad Z(Q^g)\le qZ(Q)^2.
\end{align}
In particular, this implies that
\begin{align}
\label{eqn:rate}
\lim_{\ell\to\infty} \frac1{2^\ell}\# \left\{\bs\in\{g,b\}^\ell\colon
	 I(Q^\bs)>1-\epsilon, Z(Q^\bs) > 2^{-2^{\ell\beta}}\right\}=0
\end{align}
for all $0<\epsilon,\beta<1/2$.
\end{lemma}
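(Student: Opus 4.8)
\begin{newproof}{Proof sketch of Lemma~\ref{lem:rate}}
The statement has two parts --- the one-step Bhattacharyya estimates \eqref{eqn:z-evolution} and the doubly-exponential rate conclusion \eqref{eqn:rate} --- and the plan is to prove \eqref{eqn:z-evolution} by a direct computation and then derive \eqref{eqn:rate} from it by the bootstrapping argument of \cite{ArikanTelatar2009}. Throughout I use the pairwise quantities $Z_{x,x'}(Q):=\sum_y\sqrt{Q(y\mid x)Q(y\mid x')}$ for $x\ne x'$, so that $Z(Q)=\frac1{q(q-1)}\sum_{x\ne x'}Z_{x,x'}(Q)$, together with the elementary facts $P_e(Q)\le qZ(Q)$ and that $I(Q)$ is near $1$ exactly when $Z(Q)$ is near $0$, and near $0$ exactly when $Z(Q)$ is near $1$.

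For \eqref{eqn:z-evolution}: the channel $Q^g(y_1,y_2,x_1\mid x_2)=\tfrac12 Q(y_1\mid x_1+x_2)Q(y_2\mid x_2)$ is a product, so the Bhattacharyya sum factors,
\[
Z_{x_2,x_2'}(Q^g)=\tfrac12\,Z_{x_2,x_2'}(Q)\sum_{x_1}Z_{x_1+x_2,\,x_1+x_2'}(Q),
\]
and summing over $x_2\ne x_2'$, after grouping the pairs $(x_1+x_2,x_1+x_2')$ by their common difference, gives the quadratic bound $Z(Q^g)\le qZ(Q)^2$. For $Q^b$, one expands $Q^b(y_1,y_2\mid x_1)=\tfrac12\sum_{x_2}Q(y_1\mid x_1+x_2)Q(y_2\mid x_2)$ inside $\sqrt{Q^b(\cdot\mid x_1)Q^b(\cdot\mid x_1')}$ and repeatedly applies $\sqrt{s+t}\le\sqrt s+\sqrt t$ and $\sqrt{a^2+b^2}\le a+b$ to decouple the $y_1$- and $y_2$-sums --- exactly as in the proof of $Z(W^-)\le 2Z(W)$ in \cite{Arikan2009}, with the pairwise $Z_{x,x'}$ in place of the scalar parameter --- and summing over $y_1,y_2$ leaves $Z(Q^b)\le 2Z(Q)$. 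Only these two inequalities are needed, not sharp constants.

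For \eqref{eqn:rate}: let $B_1,B_2,\dots$ be i.i.d.\ uniform on $\{g,b\}$, put $Q_0:=Q$, $Q_\ell:=Q_{\ell-1}^{B_\ell}$, and $Z_\ell:=Z(Q_\ell)$, $I_\ell:=I(Q_\ell)$. Since $I(Q^g)+I(Q^b)=2I(Q)$, the bounded process $\{I_\ell\}$ is a martingale and converges a.s.; applying Lemma~\ref{lem:basic} exactly as in the proof of Corollary~\ref{cor:i1} forces its limit into $\{0,1\}$, and the $I$--$Z$ relation then shows that $Z_\ell$ also converges a.s., with $\{Z_\infty=0\}=\{I_\infty=1\}$ up to a null set and $Z_\ell\to 0$ on that event. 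It suffices to show that, almost surely on $\{Z_\infty=0\}$, $Z_\ell\le 2^{-2^{\ell\beta}}$ for all large $\ell$: granting this, for large $\ell$ the event $\{Z_\ell\le 2^{-2^{\ell\beta}}\}$ lies inside $\{I_\ell>1-\epsilon\}$ (small $Z_\ell$ forces $I_\ell>1-\epsilon$) while on $\{Z_\infty=1\}$ it is eventually false, so bounded convergence gives $\Pr[Z_\ell\le 2^{-2^{\ell\beta}}]\to\Pr[Z_\infty=0]=\lim_\ell\Pr[I_\ell>1-\epsilon]$, whence $\Pr[I_\ell>1-\epsilon,\,Z_\ell>2^{-2^{\ell\beta}}]=\Pr[I_\ell>1-\epsilon]-\Pr[Z_\ell\le 2^{-2^{\ell\beta}}]\to0$, which is \eqref{eqn:rate}. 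To get the decay, fix $\eta>0$ and set $S_\ell:=\log_2(1/Z_\ell)$; by \eqref{eqn:z-evolution}, $S_{\ell+1}\ge S_\ell-1$ if $B_{\ell+1}=b$ and $S_{\ell+1}\ge 2S_\ell-\log_2 q$ if $B_{\ell+1}=g$. On $\{Z_\infty=0\}$ one has $S_\ell\to\infty$, so there is a random $N$ with $S_\ell>S^\ast$ for all $\ell\ge N$, where $S^\ast$ is chosen large (depending only on $q,\eta$) so that for $S_\ell>S^\ast$ a $g$-step gives $\log_2 S_{\ell+1}\ge\log_2 S_\ell+1-\eta$ and a $b$-step gives $\log_2 S_{\ell+1}\ge\log_2 S_\ell-\eta$. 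Then for $\ell\ge N$,
\[
\log_2 S_\ell\ \ge\ \log_2 S_N+\sum_{j=N+1}^{\ell}Y_j,\qquad Y_j:=(1-\eta)\ind\{B_j=g\}-\eta\,\ind\{B_j=b\},
\]
and the $Y_j$ are i.i.d.\ with mean $\tfrac12-\eta$, so by the strong law $\liminf_\ell\frac1\ell\log_2 S_\ell\ge\tfrac12-\eta$ a.s.\ on $\{Z_\infty=0\}$; letting $\eta\downarrow0$ along a sequence gives $\liminf_\ell\frac1\ell\log_2\log_2(1/Z_\ell)\ge\tfrac12>\beta$, i.e.\ $Z_\ell\le 2^{-2^{\ell\beta}}$ for all large $\ell$, as required.

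The step I expect to be the main obstacle is this last passage from the one-step bounds to the rate statement. A $g$-step squares $Z_\ell$, which only pays off once $Z_\ell$ is already small, whereas a $b$-step can inflate $Z_\ell$ by a factor $2$; what makes the argument work is the asymmetry between \emph{multiplicative} progress ($g$) and \emph{additive} loss ($b$) at the level of $\log_2(1/Z_\ell)$, together with the fact that on $\{Z_\infty=0\}$ the walk $S_\ell$ not merely tends to infinity but eventually stays above every fixed level, so that past the random warm-up time $N$ each $b$-step costs a negligible $\eta$ in $\log_2 S_\ell$ and the remainder is a law-of-large-numbers count of the flips $B_j$. The supporting ingredients --- the factorizations behind \eqref{eqn:z-evolution}, the martingale convergence of $I_\ell$, and the relations between $I$ and $Z$ at the extremes --- are routine.
\end{newproof}
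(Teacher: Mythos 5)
The paper offers no proof of Lemma~\ref{lem:rate}: it is imported verbatim from \cite{ArikanTelatar2009} and \cite{STA2009}, so there is nothing internal to compare against. Your reconstruction follows exactly the route of those references --- the one-step Bhattacharyya bounds by direct expansion of $Q^g$ and $Q^b$, followed by the bootstrapping argument on $S_\ell=\log_2(1/Z_\ell)$ --- and I find no gap: the reduction of \eqref{eqn:rate} to the almost-sure statement that $Z_\ell\le 2^{-2^{\ell\beta}}$ eventually on $\{Z_\infty=0\}$, and the $\eta$-perturbed law-of-large-numbers step past the random warm-up time $N$, are both sound (your grouping-by-difference computation for $Q^g$ in fact yields the slightly stronger $(q-1)Z(Q)^2$). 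The only place you wave your hands is the constant $2$ in $Z(Q^b)\le 2Z(Q)$ for general $q$, but as you correctly note, any bound of the form $Z(Q^b)\le cZ(Q)$ suffices for \eqref{eqn:rate}, so nothing essential is missing.
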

It thus remains to show (i), (ii), and \eqref{eqn:z-alpha}.

\begin{newproof}{Proof of (i)}
We have by definition
\begin{align}
\notag
P^+[\alpha,\gamma](y_1,y_2,u_1,v_1\mid s)&=
	\sum_{\substack{u_2,v_2:\\ \alpha u_2+\gamma v_2=s}}
	\frac1q P^+(y_1,y_2,u_1,v_1\mid u_2,v_2)\\
\label{eqn:p-plus}
&=\sum_{\substack{u_2,v_2:\\ \alpha u_2+\gamma v_2=s}}
	\tfrac1{q^3}P(y_1\mid u_1+u_2,v_1+v_2)P(y_2\mid u_2,v_2).
\end{align}
On the other hand,
\begin{align*}
P[\alpha,\gamma]^g(y_1,y_2,x\mid s)&=
	\tfrac1q P[\alpha,\gamma](y_1\mid x+s) 
	P[\alpha,\gamma](y_2\mid s)\\
&=\tfrac1{q^3} \sum_{\substack{u_1,v_1,u_2,v_2:\\ \alpha u_1+\gamma v_1=x+s
\\  \alpha u_2+\gamma v_2=s}}
	P(y_1\mid u_1,v_1)P(y_2\mid u_2,v_2)
\end{align*}
Since the constraints $\alpha u_1+\gamma v_1=x+s$ and $\alpha
u_2+\gamma v_2=s$ are linear, the above sum can be rewritten as
\begin{align}
\label{eqn:p-g}
P[\alpha,\gamma]^g(y_1,y_2,u_1,v_1\mid s)=
	\tfrac1{q^3} 
	\sum_{\substack{u_1,v_1,u_2,v_2:\\ \alpha u_1+\gamma v_1=x
\\  \alpha u_2+\gamma v_2=s}}
	P(y_1\mid u_1+u_2,v_1+v_2)P(y_2\mid u_2,v_2).
\end{align}
Comparing \eqref{eqn:p-plus} and \eqref{eqn:p-g}, we observe that the
channel $P[\alpha,\gamma]^g$ is obtained by processing the output
$(Y_1,Y_2,U_1,V_1)$ of $P^+[\alpha,\gamma]$ to retain $(Y_1,Y_2,\alpha
U_1+\gamma V_1)$. This completes the proof.
\end{newproof}

\begin{newproof}{Proof of (ii)}
We have
\begin{align}
\notag
P^-[\alpha,\gamma](y_1,y_2,\mid x)&=
	\sum_{\substack{u_1,v_1:\\ \alpha u_1+\gamma v_1=x}}
	\frac1q P^-(y_1,y_2\mid u_1,v_1)\\
\label{eqn:p-minus}
&=\sum_{\substack{u_1,v_1,u_2,v_2:\\ \alpha u_1+\gamma v_1=x}}
	\tfrac1{q^3}P(y_1\mid u_1+u_2,v_1+v_2)P(y_2\mid u_2,v_2).
\end{align}
On the other hand,
\begin{align*}
P[\alpha,\gamma]^b(y_1,y_2\mid x)&=
	\tfrac1q \sum_sP[\alpha,\gamma](y_1\mid x+s) 
	P[\alpha,\gamma](y_2\mid s)\\
&= \sum_{\substack{u_1,v_1,u_2,v_2:\\ \alpha u_1+\gamma v_1=x+s
\\  \alpha u_2+\gamma v_2=s}}
	\sum_s
	\tfrac1{q^3} P(y_1\mid u_1,v_1)P(y_2\mid u_2,v_2)
\end{align*}
As in the proof of (i), we can rewrite the above sum as 
\begin{align}
\notag
P[\alpha,\gamma]^b(y_1,y_2,u_1,v_1\mid s)&=
	\sum_{\substack{u_1,v_1,u_2,v_2:\\ \alpha u_1+\gamma v_1=x
\\  \alpha u_2+\gamma v_2=s}}
	\sum_s
	\tfrac1{q^3} P(y_1\mid u_1+u_2,v_1+v_2)P(y_2\mid u_2,v_2)\\
\label{eqn:p-b}
&=
	\sum_{\substack{u_1,v_1,u_2,v_2:\\ \alpha u_1+\gamma v_1=x}}
	\tfrac1{q^3} P(y_1\mid u_1+u_2,v_1+v_2)P(y_2\mid u_2,v_2).
\end{align}
Comparing \eqref{eqn:p-minus} and \eqref{eqn:p-b} we conclude that
$P[\alpha,\gamma]^b$ and $P^-[\alpha,\gamma]$ are equivalent.
\end{newproof}

\begin{newproof}{Proof of \eqref{eqn:z-alpha}}
It immediately follows from \eqref{eqn:z-evolution} and (ii) that $Z(P^b[\alpha,\gamma])\le
2Z(P[\alpha,\gamma])$. 
In order to complete the proof, we will show that $Z(P^+[\alpha,\gamma])\le
Z(P[\alpha,\gamma]^g)$. It will then
follow from \eqref{eqn:z-evolution} that $Z(P^+[\alpha,\gamma])\le qZ(P[\alpha,\gamma])^2$.

Define the channels
\begin{align*}
P^+_{u_1v_1}[\alpha,\gamma](y_1,y_2\mid
s)&=q^2P^+[\alpha,\gamma](y_1,y_2,u_1,u_2\mid s),\\
P_x[\alpha,\gamma]^g(y_1,y_2\mid s)&=qP[\alpha,\gamma]^g(y_1,y_2,x\mid
s).
\end{align*}
An inspection of \eqref{eqn:p-plus} and \eqref{eqn:p-g} reveals that
$$
P_x[\alpha,\gamma]^g(y_1,y_2\mid s)=\sum_{\alpha u_1+\gamma v_1=x}\tfrac1qP^+_{u_1v_1}[\alpha,\gamma](y_1,y_2\mid
s).
$$
Also, we clearly have
\begin{align*}
Z(P^+[\alpha,\gamma])&=\tfrac1{q^2}\sum_{u_1,v_1}Z(P^+_{u_1v_1}[\alpha,\gamma]),\\
Z(P[\alpha,\gamma]^g)&=\tfrac1q\sum_xZ(P_x[\alpha,\gamma]^g).
\end{align*}
It then follows from the concavity of the Bhattacharyya parameter in
the channel (cf.\ Lemma~\ref{lem:z-convex} below) that
\begin{align*}
Z(P[\alpha,\gamma]^g)&=\tfrac1q\sum_xZ(P_x[\alpha,\gamma]^g)\\
&=\tfrac1q\sum_xZ\left(\sum_{\alpha u_1+\gamma
v_1=x}\tfrac1qP^+_{u_1v_1}[\alpha,\gamma]\right)\\
&\ge \tfrac1{q^2}\sum_x\sum_{\alpha u_1+\gamma
v_1=x}Z(P^+_{u_1v_1}[\alpha,\gamma])\\
&=\tfrac1{q^2}\sum_{u_1,
v_1}Z(P^+_{u_1v_1}[\alpha,\gamma])\\
&=Z(P^+[\alpha,\gamma]),
\end{align*}
completing the proof.
\end{newproof}

\begin{lemma}
\label{lem:z-convex}
Let $Q,Q_1,\dotsc,Q_K$ be $q$-ary input channels with
$$
Q=\sum_{k=1}^K p_kQ_k,
$$
where $p_k\ge0$ and $\sum_kp_k=1$.
Then,
$$
Z(Q)\geq \sum_{k=1}^K p_kZ(Q_k).
$$
\end{lemma}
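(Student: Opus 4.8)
The plan is to expand $Z(Q)$ using the definition of the Bhattacharyya parameter and then apply a suitable concavity inequality term by term. Writing out the definition,
$$
Z(Q)=\tfrac1{q(q-1)}\sum_{x\neq x'}\sum_{y}\sqrt{Q(y\mid x)Q(y\mid x')}
=\tfrac1{q(q-1)}\sum_{x\neq x'}\sum_{y}\sqrt{\Bigl(\textstyle\sum_k p_kQ_k(y\mid x)\Bigr)\Bigl(\sum_k p_kQ_k(y\mid x')\Bigr)}.
$$
So it suffices to show, for each fixed pair $x\neq x'$ and each fixed $y$, that
$$
\sqrt{\Bigl(\textstyle\sum_k p_kQ_k(y\mid x)\Bigr)\Bigl(\sum_k p_kQ_k(y\mid x')\Bigr)}\;\geq\;\sum_k p_k\sqrt{Q_k(y\mid x)Q_k(y\mid x')},
$$
since summing this inequality over $y$ and over $x\neq x'$ and dividing by $q(q-1)$ gives exactly $Z(Q)\geq\sum_k p_kZ(Q_k)$.

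The remaining inequality is a statement about the function $(a,b)\mapsto\sqrt{ab}$ on $\fR_{\ge0}^2$: it asserts that this function is (jointly) concave, i.e. that $\sqrt{(\sum_k p_k a_k)(\sum_k p_k b_k)}\ge\sum_k p_k\sqrt{a_k b_k}$ for $p_k\ge0$, $\sum_k p_k=1$, where $a_k=Q_k(y\mid x)$ and $b_k=Q_k(y\mid x')$. One clean way to see this is the Cauchy--Schwarz inequality: writing $a=\sum_k p_k a_k$ and $b=\sum_k p_k b_k$,
$$
\sum_k p_k\sqrt{a_k b_k}=\sum_k\bigl(\sqrt{p_k a_k}\bigr)\bigl(\sqrt{p_k b_k}\bigr)\le\sqrt{\textstyle\sum_k p_k a_k}\,\sqrt{\sum_k p_k b_k}=\sqrt{ab}.
$$
Alternatively one can verify concavity of $\sqrt{ab}$ directly by checking that its Hessian is negative semidefinite, or reduce to the two-term case $K=2$ and iterate; but the Cauchy--Schwarz argument is the shortest and handles arbitrary $K$ at once.

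I do not anticipate a genuine obstacle here: the only mild point of care is that the degenerate cases ($a_k=0$ or $b_k=0$ for some $k$, or $p_k=0$) are all covered automatically by the Cauchy--Schwarz form, since it holds for arbitrary nonnegative reals. Thus the proof reduces entirely to the pointwise concavity estimate above, applied inside the double sum defining $Z$.
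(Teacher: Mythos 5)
Your proof is correct. It differs from the paper's in its decomposition: the paper first rewrites the Bhattacharyya parameter as $Z(Q)=\frac{1}{q-1}\bigl[-1+\frac1q\sum_y\bigl(\sum_x\sqrt{Q(y\mid x)}\bigr)^2\bigr]$ and then invokes, as a black box, the concavity of $t\mapsto\bigl(\sum_x\sqrt{t_x}\bigr)^2$ from Gallager's book (p.~524, ineq.~(h)), whereas you keep the pairwise form of the definition and establish the needed concavity of $(a,b)\mapsto\sqrt{ab}$ yourself via Cauchy--Schwarz, applied separately to each pair $x\neq x'$ and each $y$. The two routes are close in spirit --- indeed, expanding the square in the paper's expression reduces Gallager's inequality to exactly your pairwise estimate $\sqrt{(\sum_k p_ka_k)(\sum_k p_kb_k)}\ge\sum_k p_k\sqrt{a_kb_k}$ --- but yours is self-contained and avoids the external citation, at the cost of a slightly longer write-up; the paper's is shorter on the page because the key inequality is outsourced. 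Your remark that the degenerate cases ($p_k=0$ or vanishing transition probabilities) are handled automatically is accurate, and the summation step recovering $Z(Q)\ge\sum_kp_kZ(Q_k)$ is exactly right.
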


\begin{proof}
The proof is identical to that of \cite[Lemma 4]{Arikan2009}: 
\begin{align*}
Z(Q)&=\frac1{q-1}\left[-1+\frac1q\sum_y\left(\sum_x\sqrt{Q(y\mid
x)}\right)^2\right]\\
&\ge\frac1{q-1}\left[-1+\frac1q\sum_y\sum_kp_k\left(\sum_x\sqrt{Q_k(y\mid
x)}\right)^2\right]\\
&=\sum_kp_kZ(Q_k).
\end{align*}
Here, the inequality follows from \cite[p.\ 524, ineq.\
(h)]{Gallager}.
\end{proof}
\section*{Appendix C}
\begin{lemma}
\label{lem:ugly}
Let $X,W\in\fF_q$ be independent and uniformly distributed random
variables, and let $Y$ be an arbitrary random variable. For every
$\epsilon>0$, there exists $\delta>0$ such that 
\begin{itemize}
\item[(i)]
$I(X;Y)<\delta$, $I(W;Y)<\delta$, $H(X\mid YW)<\delta$, $H(W\mid
YX)<\delta$ and
\item[(ii)]
$H(\alpha X+\gamma W\mid Y)\notin(\delta,1-\delta)$ for all
$\alpha,\gamma\in\fF_q$,
\end{itemize}
implies 
$$
I(\alpha'X+\gamma'W;Y)>1-\epsilon.
$$
for some $\alpha',\gamma'\in\fF_q$.
\end{lemma}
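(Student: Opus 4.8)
The plan is to work with the $q+1$ conditional entropies $h_{[\alpha:\gamma]}:=H(\alpha X+\gamma W\mid Y)$, one for each line $[\alpha:\gamma]$ of $\fF_q^2$ (multiplying $(\alpha,\gamma)$ by a nonzero scalar changes neither $\alpha X+\gamma W$ up to relabeling nor this quantity). Since $q$ is prime, $\alpha X+\gamma W$ is uniform on $\fF_q$ whenever $(\alpha,\gamma)\ne(0,0)$, so $I(\alpha X+\gamma W;Y)=1-h_{[\alpha:\gamma]}$ and the conclusion to be proved is simply that $h_{[\alpha':\gamma']}<\epsilon$ for some line $[\alpha':\gamma']$. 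Assume not; then $h_{[\alpha:\gamma]}\ge\epsilon$ for all $q+1$ lines, and choosing $\delta<\epsilon$, hypothesis~(ii) upgrades this to $h_{[\alpha:\gamma]}\ge 1-\delta$ for \emph{every} line. Hypothesis~(i) will be used only through $H(XW\mid Y)=H(X\mid Y)+H(W\mid XY)<1+\delta$. So the task reduces to deriving a contradiction from: all $q+1$ linear images of $(X,W)$ have conditional entropy $\ge 1-\delta$ given $Y$, while $H(XW\mid Y)<1+\delta$.

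First I would pass to a single realisation of $Y$. Each of the $q+1$ nonnegative quantities $1-H(\alpha X+\gamma W\mid Y=y)$ has $y$-average at most $\delta$, and $H(W\mid X,Y=y)$ has $y$-average below $\delta$; by Markov's inequality and a union bound there is an event of $Y$ of probability at least $1-(q+2)\sqrt\delta$ on which, writing $\mu$ for the conditional law of $(X,W)$, all $q+1$ linear images of $(X,W)$ under $\mu$ have entropy $>1-\sqrt\delta$ and $H_\mu(X,W)=H_\mu(X)+H_\mu(W\mid X)<1+\sqrt\delta$. It therefore suffices to prove: for $\eta$ small enough in terms of $q$, no distribution $\mu$ on $\fF_q^2$ has all $q+1$ linear images within $\eta$ of uniform (in entropy) while $H_\mu(X,W)<1+\eta$. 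Any $\delta$ with $\sqrt\delta<\min\{\eta,\,(q+2)^{-1}\}$ and $\delta<\epsilon$ then finishes.

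For this last statement I would use an exact collision identity. Let $(X,W),(X',W')$ be i.i.d.\ $\sim\mu$. Because each nonzero vector of $\fF_q^2$ lies on exactly one of the $q+1$ lines, $\sum_{[\alpha:\gamma]}\ind\{\alpha(X-X')+\gamma(W-W')=0\}=1+q\,\ind\{(X,W)=(X',W')\}$; taking expectations, and using $\Pr(\alpha X+\gamma W=\alpha X'+\gamma W')=\sum_s\Pr(\alpha X+\gamma W=s)^2=q^{-H_2(\alpha X+\gamma W)}$ (base-$q$ Rényi-$2$ entropy) and likewise for $(X,W)$, gives
$$
\sum_{[\alpha:\gamma]}q^{-H_2(\alpha X+\gamma W)}=1+q^{\,1-H_2(X,W)}.
$$
Now $H_\mu(\alpha X+\gamma W)>1-\eta$ forces the law $P$ of $\alpha X+\gamma W$ to be close to uniform $U$: from $q^{-H_2(P)}=\tfrac1q+\|P-U\|_2^2$ and Pinsker's inequality one gets $q^{-H_2(\alpha X+\gamma W)}\le\tfrac1q+c_q\eta$ for a constant $c_q$, so the left side is at most $(q+1)(\tfrac1q+c_q\eta)$. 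On the right, $H_2(X,W)\le H_\mu(X,W)<1+\eta$, so the right side is at least $1+q^{-\eta}$. As $\eta\to0$ these two bounds tend to $1+\tfrac1q$ and $2$, and $1+\tfrac1q<2$ for every $q\ge2$; hence they are already incompatible once $\eta$ is below a threshold depending only on $q$, which is the desired contradiction.

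The one step I expect to require care is the localisation: marshalling the averaged smallness assumptions (the lower bounds on the $h_{[\alpha:\gamma]}$ and the part of~(i) that controls $H(W\mid XY)$) into a single conditional distribution $\mu$ for which \emph{simultaneously} all $q+1$ projections are near-uniform and $H_\mu(X,W)$ is near $1$. After that the argument is self-contained and elementary---its combinatorial heart is just that the $q+1$ lines partition the nonzero vectors of $\fF_q^2$ (equivalently: the only real signed measure on $\fF_q^2$ that sums to zero along every affine line is zero)---and it uses no result from the body of the paper.
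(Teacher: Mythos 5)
Your proof is correct, and it takes a genuinely different route from the paper's. The paper's argument uses $I(W;Y)<\delta$ and $H(X\mid WY)<\delta$ (via Pinsker and Markov) to show that for most $y$ the conditional law $p_{XW\mid Y=y}$ is $L_1$-close to a permutation matrix $p_\pi$, invokes the combinatorial fact that every permutation $\pi$ of $\fF_q$ admits $\alpha_\pi,\gamma_\pi$ with $\Pr[\alpha_\pi X+\gamma_\pi W=\pi(0)]\ge 2/q$ (hence entropy at most $1-c(q)$), pigeonholes over the $q!$ permutations to extract a single pair $(\alpha',\gamma')$ that works on a set of $y$'s of probability at least roughly $1/q!$, and only then applies hypothesis~(ii) to push $I(\alpha'X+\gamma'W;Y)$ from ``bounded away from $0$'' up to ``$\ge 1-\delta$.'' You instead argue by contradiction, use~(ii) at the outset to force \emph{all} $q+1$ line entropies up to $1-\delta$, localise to a \emph{single} good $y$ by Markov plus a union bound, and kill the resulting configuration with the exact collision identity $\sum_{[\alpha:\gamma]}q^{-H_2(\alpha X+\gamma W)}=1+q^{\,1-H_2(X,W)}$ over the projective line --- a discrete uncertainty principle saying the $q+1$ linear images cannot all be near-uniform while the joint law has collision entropy near $1$, since $1+\tfrac1q<2$. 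I checked the pieces: the partition identity for the lines, the $L_2$ expansion $q^{-H_2(P)}=\tfrac1q+\|P-U\|_2^2$ combined with Pinsker, the Rényi monotonicity $H_2\le H$, and the localisation step you flagged (which is fine: the $q+2$ exceptional events each have probability at most $\sqrt\delta$, so a good $y$ exists once $(q+2)\sqrt\delta<1$). What your approach buys is brevity, fully explicit constants (no $c(q)$ from the permutation lemma and no $1/q!$ pigeonhole loss), the need for only one conditioning point rather than a positive-probability set, and the observation that only $H(W\mid XY)<\delta$ from hypothesis~(i) is actually needed; what the paper's approach buys is an explicit description of the near-extremal structure (permutation matrices) and a constructive identification of the good pair $(\alpha',\gamma')$ rather than a proof by contradiction. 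The only cosmetic point worth adding is the one-line remark, which you make implicitly, that the pair $(\alpha',\gamma')=(0,0)$ is useless for the conclusion, so the statement is indeed equivalent to $H(\alpha'X+\gamma'W\mid Y)<\epsilon$ for some projective line.
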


\begin{proof}
Let $\pi$ be a permutation on $\fF_q$, and let
$$
p_\pi(x,w)=\begin{cases}\tfrac1q & \text{if }w=\pi(x)\\ 0
&\text{otherwise}\end{cases}.
$$
Note that $H(X)=H(W)=1$ and $H(W\mid X)=H(X\mid W)=0$ whenever $(X,W)$
is distributed as $p_\pi$.
We claim that for every $\pi$, there exist 
$\alpha_\pi,\gamma_\pi\in\fF_q\backslash\{0\}$ such that 
$$
H(\alpha_\pi X+\gamma_\pi W)<1-c(q),
$$
where $c(q)>0$ depends only on $q$. To see this, given a permutation
$\pi$, let
\begin{align}
\label{eqn:alpha-gamma}
\alpha_\pi:=\pi(0)-\pi(1),\quad\gamma_\pi:=1,\quad\mu:=\pi(0).
\end{align}
Clearly, $\alpha_\pi\neq0$. It is also easy
to check that with these definitions we have
$$
\Pr[\alpha_\pi X+\gamma_\pi W=\mu]\ge\Pr[(X,W)=(0,\pi(0))]+
\Pr[(X,W)=(1,\pi(1))]= \tfrac2q,
$$
which yields the claim. It also follows from the continuity of entropy
in the $L_1$ metric that 
$$
\|p_{XW}-p_\pi\|\le\epsilon\quad\text{ implies }\quad H(\alpha_\pi
X+\gamma_\pi W)\le (1-c(q))(1-o(\epsilon))^{-1}.
$$

We now show that for every $\epsilon>0$, there exists a $\delta>0$
such that whenever $H(W|YX)<\delta$, $H(X|YW)<\delta$, $I(W;Y)<\delta$ and
$I(X;Y)<\delta$, then there is a set $S$ of $y$'s with $p_Y(S)>1-\epsilon$ such that for all $y\in S$
$$
\min_\pi\|p_{XW\mid Y=y}-p_\pi\|_1<\epsilon.
$$

From $I(W;Y)<\delta$, Pinsker's
inequality yields
$$
\sum_y p_Y(y)\|\tfrac1q-p_{W|Y=y}\|_1 < \sqrt{2\delta\ln 2}<2\sqrt{\delta},
$$
and we conclude that the set
$$
G:=\{y:\|\tfrac1q-p_{W|Y=y}\|_1<\delta^{1/4}\}
$$
has probability at least $1-2\delta^{1/4}$.  Note that for $y\in G$,
\begin{align*}
\tfrac1q-\tfrac1q\delta^{1/4}&< p_{W|Y=y}(0|y)<\tfrac1q+\tfrac1q\delta^{1/4}\\
\tfrac1q-\tfrac1q\delta^{1/4}&< p_{W|Y=y}(1|y)<\tfrac1q+\tfrac1q\delta^{1/4}
\end{align*}
and thus
\begin{align*}
(1-\delta^{1/4})p_Y(y)&< p_{Y|W}(y|0)< (1+\delta^{1/4})p_Y(y)\\
(1-\delta^{1/4})p_Y(y)&< p_{Y|W}(y|1)< (1+\delta^{1/4})p_Y(y)
\end{align*}
Furthermore, as
$$
\delta>H(X|WY)=\sum_{w,y}p_{WY}(w,y)H(X|W=w,Y=y),
$$
the set $\{(w,y)\colon H(X|W=w,Y=y)>\sqrt{\delta}\}$ has probability
at most $\sqrt{\delta}$.  Let
$$
B_w=\{y\colon H(X|W=w,Y=y)>\sqrt{\delta}\},\quad w\in\fF_q\text{ and }
B=\cup_w B_w.
$$
Then,
\begin{align*}
P_Y(G\cap B_w)&=\sum_{y\in G\cap B_w}p_Y(y)\\
&\leq[1-\delta^{1/4}]^{-1}\sum_{y\in G\cap B_w}p_{Y|W}(y|w)\\
&\leq[1-\delta^{1/4}]^{-1}\sum_{y\in B_w}p_{Y|W}(y|w)\\
&\leq[1-\delta^{1/4}]^{-1}2\sum_{y\in B_w}p_{WY}(w,y)\\
&\leq[1-\delta^{1/4}]^{-1}2\sqrt{\delta}
\end{align*}
for all $w\in\fF_q$, and thus
$$
P_Y(G\cap B)\leq 2q\sqrt{\delta}[1-\delta^{1/4}]^{-1},
$$
and the set $S=G\cap B^c$ has probability
$$
P_Y(S)>1-2\delta^{1/4}-2q\sqrt{\delta}/[1-\delta^{1/4}]=1-o(\delta).
$$
Note that for all $y\in S$ we have for any $w$, $|\tfrac1q-p_{W|Y=y}(w)|<o(\delta)$,
and $p_{X|WY}(x|w,y)\not\in (o(\delta),1-o(\delta))$, and thus
$$
\min_\pi\|p_{WX|Y=y}-p_\pi\|<o(\delta).
$$
In particular, this implies that there exist $\pi'$ and $S'\subset S$ with
$P_Y(S')\ge P_Y(S)/q!$ such that
$$
\|p_{WX|Y=y}-p_{\pi'}\|<o(\delta)
$$
for all $y\in S'$.
Letting $\alpha'=\alpha_{\pi'}\text{ and }\gamma'=\gamma_{\pi'},$
where $\alpha_{\pi'}$ and $\gamma_{\pi'}$ are defined as in
\eqref{eqn:alpha-gamma}, we obtain 
\begin{align*}
H(\alpha'X+\gamma'W\mid Y)&\le
	P_Y(S')(1-c(q))(1-o(\epsilon))^{-1}+P_Y(S'^c)\\
&=(1-c_2)(1-o(\epsilon))^{-1}
\end{align*}
where $c_2>0$ depends only on $q$. Noting that
$I(\alpha'X+\gamma'W;Y)\notin(\delta,1-\delta)$ by assumption, and
that
\begin{align*}
I(\alpha'X+\gamma'W;Y)&=H(\alpha'X+\gamma'W)-H(\alpha'X+\gamma'W\mid
Y)\\
&\ge 1-(1-c_2)(1-o(\epsilon))^{-1},
\end{align*}
we see that if $\delta$ is sufficiently small, then 
$I(\alpha'X+\gamma'W;Y)\ge1-\delta$.
\end{proof}

\end{document}